\documentclass[aoas, noinfoline]{imsart}

\usepackage[ansinew]{inputenc}
\usepackage{graphicx} 
\usepackage{textcomp} 
\usepackage{amsfonts} 
\usepackage{amsmath}
\usepackage{amssymb}
\usepackage{latexsym}
\usepackage{array}
\usepackage{float}
\usepackage{amsthm} 
\usepackage{calc}
\usepackage{perpage}
\usepackage{textcomp}
\usepackage{verbatim} 
\usepackage{marvosym}
\usepackage{hieroglf}
\usepackage{multirow}
 \usepackage{rotating}
\usepackage{dsfont} 
\usepackage{tikz}
\usepackage{url}

\usepackage[english]{babel}
\usepackage{hyperref} 
\bibliographystyle{imsart-number}
\usepackage{color}



\newtheorem{theorem}{Theorem}[section]

\newtheorem{lemma}{Lemma}[section]

\newtheorem{remark}{Remark}[section]



\newcommand{\1}{{\rm 1}\mskip -4,5mu{\rm l} }

\newcommand{\argmin}{\mathop{\mathrm{arg\,min}}}


\newcommand{\op}[1]{\operatorname{#1}}  


\newcommand{\sign}{\op{sign}}


\def\R{\mathbb{R}}

\newcommand{\rp}{\mathbb{R}^p}
\newcommand{\rn}{\mathbb{R}^n}

\newcommand{\mpr}{\mathbb{P}}



\newcommand{\indexlasso}{\{1,\dots,p\}}
\newcommand{\est}{\widehat\beta}
\newcommand{\set}{{\widehat S}}
\newcommand{\setnor}{{|\widehat S|}}
\newcommand{\estset}{{\widehat\beta_{\widehat S}}}
\newcommand{\matset}{X_{\widehat S}}
\newcommand{\graset}{X_{\widehat S}^TX_{\widehat S}}
\newcommand{\grasetinv}{(X_{\widehat S}^TX_{\widehat S})^{\text -1}}
\newcommand{\rset}{\R^{\setnor}}
\newcommand{\betatrue}{\beta^*}

\newcommand{\estref}{\overline{\beta}}
\newcommand{\setref}{\overline S}
\newcommand{\estsetref}{{\estref_{\widehat S}}}

\newcommand{\estcri}{\widetilde{\beta}_c}

\pdfoutput=1
\usepackage{pdfsync}

\def\paper{1} 

\RequirePackage[OT1]{fontenc}
\RequirePackage[numbers,sort]{natbib}
\RequirePackage[colorlinks,citecolor=blue,urlcolor=blue]{hyperref}
\RequirePackage{hypernat}
\setattribute{journal}{name}{}

\begin{document}

\begin{frontmatter}
\title{Trust, but verify: benefits and pitfalls of ~least-squares~refitting~in high~dimensions}
\runtitle{Least-squares refitting in high dimensions}

\begin{aug}

\if\paper1 
\author{\fnms{Johannes Lederer\thanksref{t2}}\ead[label=e1]{johanneslederer@mail.de}}

\thankstext{t2}{The author acknowledges financial support from the Swiss National Science Foundation.}

\address[a]{~\\
Johannes Lederer\\
\printead{e1}\\
Homepage: \url{http://www.johanneslederer.de}    
}

\runauthor{Johannes Lederer}

\affiliation{University of California, Berkeley}
\fi

\end{aug}

\if\paper1 

\begin{abstract}~
Least-squares refitting is widely used  in high dimensional regression to reduce the prediction bias of $\ell_1$-penalized estimators (e.g., Lasso and Square-Root Lasso). We present theoretical and numerical results that provide new insights into the benefits and pitfalls of least-squares refitting. In particular, we consider both prediction and estimation, and we pay close attention to the effects of correlations in the design matrices of linear regression models, since these correlations - although often neglected - are crucial in the context of linear regression, especially in high dimensional contexts. First, we demonstrate that the benefit of least-squares refitting strongly depends on the setting and task under consideration: least-squares refitting can be beneficial even for settings with highly correlated design matrices but is not advisable for all settings, and least-squares refitting can be beneficial for estimation but performs better for prediction. Finally, we introduce a criterion that indicates whether least-squares refitting is advisable for a specific setting and task under consideration, and we conduct a thorough simulation study involving the Lasso to show the usefulness of this criterion.
\end{abstract}

\begin{keyword}
\kwd{high dimensional regression}\kwd{Lasso}\kwd{least-squares refitting}\kwd{Square-Root Lasso}
\end{keyword}
\fi

\end{frontmatter}

\section{Introduction}
High dimensional statistical models enjoy increasing popularity in many fields of research. Particularly popular are regression models of the form 
\begin{equation}
  \label{eq:model}
  Y= X\betatrue +\sigma \epsilon
\end{equation}
with outcome $Y\in\rn$, design matrix $X\in\R^{n\times p}$, regression vector $\betatrue\in\rp$, and noise vector $\epsilon\in\rn$ with associated noise level $\sigma>0$. High dimensional applications of such regression models involve a number of parameters $p$ that is comparable to the number of observations $n$ or even much larger. Nevertheless, many of these applications also involve a sparsity level~$s$, which is the number of nonzero entries of the regression vector, that is considerably smaller than~$n$ and~$p$. In this context, $\ell_1$-penalized methods have exhibited excellent numerical and theoretical properties for the estimation of the regression vector~$\betatrue$ ({parameter estimation}), of the active set $S:=\{j\in\indexlasso:\betatrue_j\neq 0\}$ ({variable selection}), and of $X\betatrue$ ({prediction}) from the outcome $Y$ and the design matrix $X$. The corresponding initial estimators are
\begin{equation}
  \label{eq:estimator}
  \est := \argmin_{\beta\in\rp}\left\{g\left(\|Y-X\beta\|_2^2\right)+\lambda\|\beta\|_1\right\}
\end{equation}
for given real-valued function $g$ on $[0,\infty)$ and tuning parameter $\lambda>0$. Prominent examples are the Lasso~\cite{Tibshirani-LASSO} (see also~\cite{Buhlmann11} and references therein)
\begin{equation}
  \label{eq:lasso}
  \argmin_{\beta\in\rp}\left\{\|Y-X\beta\|_2^2+\lambda\|\beta\|_1\right\},
\end{equation}
which corresponds to $g: x \mapsto x$, and the Square-Root Lasso \cite{Belloni11,ScaledLasso11} (see also~\cite{Yoyo13})
\begin{equation}
  \label{eq:sqrtlasso}
  \argmin_{\beta\in\rp}\left\{\|Y-X\beta\|_2+\lambda\|\beta\|_1\right\},
\end{equation}
which corresponds to $g: x \mapsto \sqrt x$.\\

The estimates resulting from~\eqref{eq:estimator} are often treated further to circumvent some of their well-known shortcomings. To reduce the number of superfluously estimated parameters, for example, the entries of the estimates are often thresholded or included in the penalty term of a subsequent estimation, see~Remark~\ref{rm:othermethods} in Section~\ref{sec:theory} and~\cite{Candes08, Meinsthresh09, SaraShu11}. In this paper, however, we study least-squares refitting on the estimated active set $\set:=\{j\in\indexlasso:\est_j\neq 0\}$. This method was initially designed for removing the prediction biases but is in this paper considered for both prediction and estimation. The corresponding estimators, called LS refitted estimators in the following, are
\begin{equation}
  \label{eq:estimatorrefitted}
  \estref_\set:=\argmin_{\xi\in\rset}\|Y-X_{\set}\xi\|_2^2,~~~~\estref_{\set^c}:=0,
\end{equation}
where the subscripts indicate that the vectors and matrices are restricted to the entries and columns, respectively, with indices in the corresponding sets. It turns out that the level of correlation in the design matrix, that is, the magnitudes of the off-diagional entries of the matrix $X^TX$, play an important role in this context. If  - in the sense of restricted eigenvalues, see~\cite{Sara09} and references therein - the level of correlation  in the design matrix is low, the LS refitted estimators typically outperform the corresponding initial estimators regarding prediction; this follows from~\cite{Belloni09}, since for weakly correlated design matrices, the initial estimators typically provide good estimates of the active sets, and the restricted eigenvalue conditions are satisfied by construction.  For arbitrary design matrices, in contrast, a thorough comparison of the initial estimators and the corresponding LS~refitted estimators has not been made. Moreover, least-squares refitting can be applied for estimation as well but has not been studied thoroughly for this task, yet. Therefore, we have
\begin{itemize}
\item[\bf Goal 1:] Study least-square refitting for arbitrary design matrices.
\end{itemize}
This is of great interest, since
\begin{itemize}
\item[$-$] correlations are common, especially in high dimensional applications;
\item[$-$] for many applications, it is unclear whether the results for weakly correlated design matrices apply, since the corresponding restricted eigenvalues depend on the active set $S$ and can therefore not be calculated;
\item[$-$] even for highly correlated design matrices, the initial estimators can perform well, especially for prediction, and the LS refitted estimators could therefore be of interest for this case as well.
\end{itemize}
\begin{itemize}
\item[\bf Goal 2:] Study least-square refitting for estimation.
\end{itemize}
This is of great interest, since
\begin{itemize}
\item[$-$] estimation is needed in many applications.\\
\end{itemize}

In this paper, we relate the prediction and estimation errors of the LS~refitted estimators to those of the corresponding initial estimators; these relations hold for arbitrary design matrices and therefore supplement the ones in~\cite{Belloni09}, which cover prediction and weakly correlated design matrices only. We then complement the theoretical results with a thorough numerical comparison of the Lasso and its least-squares refitted version, which are the most popular estimators in the framework considered in this paper. First, we find 
\begin{itemize}
\item[\bf Result 1:] For both prediction and estimation, least-squares refitting can be beneficial even if the design matrix is highly correlated; However, least-squares refitting can be disadvantageous if the level of correlation is high and, at the same time, the sparsity level is considerably larger than 1.
\end{itemize}
\begin{itemize}
\item[\bf Result 2:] Least-squares refitting can be advisable for estimation but exhibits better performances for prediction.
\end{itemize}
These results provide a new insight into the properties of least-squares refitting but depend on the active set, which is not known beforehand in practice. Therefore, a crucial question remains: given an application, is least-squares refitting beneficial or should the initial estimates not be modified for this specific application? We address this question introducing adaptive estimators induced by a criterion that is designed to distinguish between favorable and unfavorable settings for least-squares refitting. We then test for various settings the numerical performance of the adaptive estimator that corresponds to the Lasso and the associated least-squares refitted estimator. We find for prediction and estimation
\begin{itemize}
\item[\bf Result 3:] The introduced adaptive estimators can outperform the Lasso and its least-squares refitted version.\\
\end{itemize}

The structure of the paper is as follows: First, in Section~\ref{sec:conventions}, we introduce the notation and some conventions comprising mild assumptions on the function $g$ and the noise vector $\epsilon$. We then turn to the main part of the paper, Section~\ref{sec:main}. Motivated by some illustrative simulation results given in Section~\ref{sec:fo}, we study theoretical and practical aspects of the LS~refitted estimators in Sections~\ref{sec:theory} and~\ref{sec:criteria}. In particular, we present in Section~\ref{sec:theory} bounds for the errors of the LS~refitted estimators that hold for arbitrary design matrices and introduce and numerically test in Section~\ref{sec:criteria} a criterion for the application of LS~refitted estimators. We then summarize our findings in Section~\ref{sec:discussion}. Finally, we give detailed proofs and further remarks in Section~\ref{sec:proofs}.


\subsection{Notation and conventions}\label{sec:conventions} 
We assume that the function $g$ is convex and has a strictly positive derivative $g'$. This implies, in particular, that $\beta\mapsto g\left(\|Y-X\beta\|_2^2\right)$ is convex and a solution of~\eqref{eq:estimator} exists, see Lemma~\ref{res:convex} in Section~\ref{sec:techrm}. If the solution is not unique,  $\est$ is defined as one of the solutions with a minimal number of nonzero entries. Note also that the differentiability in $0$ is assumed only for ease of exposition and can be easily circumvented to include, for example, the Square-Root Lasso~\eqref{eq:sqrtlasso}, see Remark~\ref{rm:sqrtlassoextension} in Section~\ref{sec:techrm}. Next, the
design matrix $X$ is assumed to be nonrandom and normalized such that $\left(X^TX\right)_{jj}=n$ for $j\in\indexlasso$. Moreover, the noise is assumed to be nonsingular in the sense that the probability of the event $Y\in U$ is zero for any nonrandom subspace $U\subsetneq\rn$.\\

For ease of exposition, we introduce some additional, convenient notation. To this end, let $A\subset\indexlasso$ be a nonempty set, $d\in\{1,2,\dots,p\}$ an integer, and $v\in\R^d,~\beta\in\rp$ vectors. First, the cardinality of $A$ is denoted by $|A|$. Then, the vector consisting of the entries of $\beta$ with indices in $A$ is denoted by $\beta_A\in\R^{|A|}$, and the matrix consisting of the columns of $X$ with indices in $A$ is denoted by $X_A\in\R^{n\times |A|}$. Next, the indices of the nonzero entries of $v$ are denoted by $S(v):=\{j\in\{1,\dots,d\}: v_j\neq 0\}$, and $\sign(v):=(\sign(v_1),\dots,\sign(v_d))^T\in\R^d$ is the vector containing the signs of the entries of $v$. Finally, we set $\argmin_{\xi\in\R^{\setnor}}\|Y-X_\set\xi\|_2:=0$ if $\set=\emptyset$. However, since the results can be easily derived for $\set=\emptyset$, we only consider $\set\neq\emptyset$.

\section{Main results}\label{sec:main}

\subsection{First observations}\label{sec:fo} In this section, we numerically compare the Lasso with the corresponding LS refitted estimator. We consider three settings that differ from each other especially in the levels of correlation in the design matrices and the sparsity levels. The results indicate that the benefit of least-squares refitting crucially depends on the setting and the task under consideration.
 
\paragraph{Setting} We generated settings with different sparsity levels $s$ and levels of correlation in the design matrices~$X$. To this end, we first generated the columns of the design matrices according to
\begin{equation*}
  X_j:=\sqrt n~\frac{\kappa v+(1-\kappa)\xi_j}{\|\kappa v+(1-\kappa)\xi_j\|_2} \in \rn,
\end{equation*}
where the vectors $v,\xi_1,\dots,\xi_p$ were independently sampled from the standard normal distribution in $\rn$. The level of correlation is determined by the constant $\kappa\in[0,1]$: the larger $\kappa$, the more are the  columns of the design matrix correlated. We then set the regression vector to 
\begin{equation*}
\betatrue:=(1/s,2/s,\dots,1,0,\dots,0)^T\in\rp  
\end{equation*}
with sparsity level $|S|=s$. We finally generated the outcome $Y$ according to the model~\eqref{eq:model} fixing the standard deviation of the noise $\sigma$ and sampling the vector $\epsilon$  from the standard normal distribution in $\rn$.

\paragraph{Accessible quantities in real data applications} 
The outcome $Y$ and the design matrix $X$ (and therefore the sample size $n$ and the number of parameters~$p$) are known and yield the estimators $\est$ and $\estref$ via~\eqref{eq:estimator} and~\eqref{eq:estimatorrefitted}, respectively.  In contrast, the regression vector $\betatrue$ and especially the sparsity level $s$ are subject to estimation and not accessible. Next, the correlations in the design matrix can - in principle - be derived from the design matrix. However, the level of correlation is often measured with restricted eigenvalues or related quantities; these quantities depend on the sparsity level and are thus inaccessible.  Finally, the standard deviation of the noise $\sigma$ may be known or unknown. While many estimators include $\sigma$ or an estimate of it, there are also estimators that are adaptive with respect to $\sigma$ (for example, the Square-Root Lasso~\eqref{eq:sqrtlasso}). 

\begin{table}
\begin{tabular}{l c c c c}
 & \bf{pred. error} & \bf est. error & \bf false neg. & \bf false pos.\\ 
\multicolumn{5}{c}{~}\\
\multicolumn{5}{c}{$n=1000,~p=1000,~\sigma=0.3,~s=2,~\kappa=0$}\vspace{0.2cm}\\
{\bf Lasso} & $(2.87 \pm 0.04)\times 10^{\text -3}$ & $(5.28\pm 0.03)\times 10^{\text -2}$ & $0.00\pm 0.00$ & $0.11\pm 0.01$  \\ 
{\bf LS Lasso}& $(0.34 \pm 0.02)\times 10^{\text -3}$ & $(1.50\pm 0.04)\times 10^{\text -2}$ & $0.00\pm 0.00$ & $0.11\pm 0.01$\\
{\bf Zero}&$1.25\pm 0.01$&$1.12$&2&0\\
~&~&~&~&~\\
\multicolumn{5}{c}{~}\\
\multicolumn{5}{c}{$n=100,~p=1000,~\sigma=0.3,~s=2,~\kappa=0.9$}\vspace{0.2cm}\\
{\bf Lasso} & $(2.31 \pm 0.03)\times 10^{\text -2}$ & $(9.38\pm 0.07)\times 10^{\text -1}$ & $0.87\pm 0.02$ & $11.6\pm 0.2$  \\ 
{\bf LS Lasso}& $(1.06 \pm 0.02)\times 10^{\text -2}$ & $(9.22\pm 0.07)\times 10^{\text -1}$ & $0.87\pm 0.02$ & $11.6\pm 0.2$ \\
{\bf Zero}&$2.24\pm 0.01$&$1.12$&2&0\\
~&~&~&~&~\\ 
\multicolumn{5}{c}{~}\\
\multicolumn{5}{c}{$n=100,~p=1000,~\sigma=0.3,~s=20,~\kappa=0.9$}\vspace{0.2cm}\\
{\bf Lasso} & $(5.10 \pm 0.04)\times 10^{\text -2}$ & $2.19\pm 0.02$ & $8.29\pm 0.08$ & $50.9\pm 0.4$\\ 
{\bf LS Lasso}& $(5.67 \pm 0.04)\times 10^{\text -2}$ & $3.51\pm 0.04$ & $8.29\pm 0.08$ & $50.9\pm 0.4$\\
{\bf Zero}&$108.99\pm 0.01$&$2.68$&20&0\\
~&~&~&~&~\\
\end{tabular}
\caption{Prediction, estimation, and variable selection performances of the Lasso~\eqref{eq:lasso}, the LS~Lasso (which is the corresponding LS refitted estimator~\eqref{eq:estimatorrefitted}), and the Zero estimator $0\in\rp$. The setting and computation are detailed in the section. For a plot of the relative prediction and estimation errors, see Figure~\ref{fig:foillu} below.}\label{tab:fo}
\end{table}

\paragraph{Computation} We considered the Lasso~\eqref{eq:lasso} and the corresponding LS refitted estimator \eqref{eq:estimatorrefitted}, called LS Lasso in the following, and compared their performances via their prediction and estimation errors and their variable selection properties. Additionally, we determined the performances of the Zero estimator 
\begin{equation}
  \label{eq:zero}
  {\beta}^{\text{zero}}:=0\in\rp
\end{equation}
to check the usefulness of the Lasso and the LS Lasso. More precisely, for any of the three estimators $\beta\in\{\est,\estref, \beta^{\text{zero}}\}$, the prediction error was set to $\|X\beta-X\betatrue\|_2^2/n$, the estimation error to $\|\beta-\betatrue\|_2$, and the variable selection properties were determined via the number of false negatives $|\{j\in S, j\notin S(\beta)\}|$ and false positives $|\{j\notin S, j\in S(\beta) \}|$. For all settings, we performed 1000 repetitions and denoted the corresponding means by 
\begin{align*}
  \text{pred. error}&:=\text{mean~}\|X\beta-X\betatrue\|_2^2/n,\\
\text{est. error}&:=\text{mean~}\|\beta-\betatrue\|_2,\\
\text{false neg.}&:=\text{mean~}|\{j\in S, j\notin S(\beta)\}|,\\
\text{false pos.}&:=\text{mean~}|\{j\notin S, j\in S(\beta)\}|.
\end{align*}
Upper bounds on the empirical standard deviations of these quantities are also given (in the brackets next to the corresponding results). Finally, we note that we used the tuning parameter $\lambda=2\sigma\sqrt{2\log(2p)}$ (cf. Remark~\ref{rm:tuningparameter}) and the glmnet algorithm \cite{Hastie10} (see Section~\ref{sec:numerics} for more details on the computation).
\begin{center}
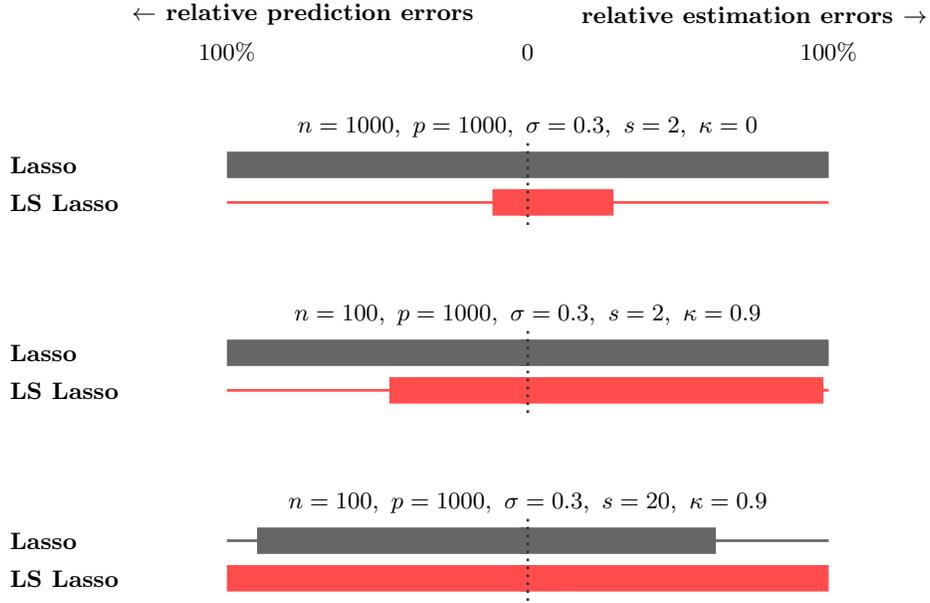
\begin{figure}[t]
  \begin{tikzpicture}{scale=0.1}

\node[align=center] at (0,5) {\normalfont{0}};
\node[align=center] at (4,5) {$100 \%$};
\node[align=center] at (-4,5) {$100 \%$};

\node[right] at (0.6,5.5) {\bf relative estimation errors $\rightarrow$};
\node[left] at (-0.6,5.5) {\bf $\leftarrow$  relative prediction errors};

\node[align=center] at (0,4) {$n=1000,~p=1000,~\sigma=0.3,~s=2,~\kappa=0$};
\node[align=center] at (0,1.5) {$n=100,~p=1000,~\sigma=0.3,~s=2,~\kappa=0.9$};
\node[align=center] at (0,-1) {$n=100,~p=1000,~\sigma=0.3,~s=20,~\kappa=0.9$};

\draw [red!70!white, line width=1] (-4,3) -- (4,3);

\draw [red!70!white, line width=10] (0,3) -- (-0.47,3);
\draw [black!60!white!, line width=10] (-0,3.5) -- (-4,3.5);
\draw [red!70!white, line width=10] (--1.14,3) -- (-0.1,3);
\draw [black!60!white, line width=10] (--4,3.5) -- (-0.1,3.5);

\draw [red!70!white, line width=1] (--4,0.5) -- (-4,0.5);

\draw [red!70!white, line width=10] (-0,0.5) -- (-1.84,0.5);
\draw [black!60!white, line width=10] (-0,1) -- (-4,1);
\draw [red!70!white, line width=10] (--3.93,0.5) -- (-0.1,0.5);
\draw [black!60!white, line width=10] (--4,1) -- (-0.1,1);

\draw [black!60!white, line width=1] (--4,-1.5) -- (-4,-1.5);

\draw [red!70!white, line width=10] (-0,-2) -- (-4,-2);
\draw [black!60!white, line width=10] (-0,-1.5) -- (-3.6,-1.5);
\draw [red!70!white, line width=10] (--4,-2) -- (-0.1,-2);
\draw [black!60!white, line width=10] (--2.5,-1.5) -- (-0.1,-1.5);

\node[right] at (-7,3.5) {\bf Lasso};
\node[right] at (-7,3) {\bf LS Lasso};

\node[right] at (-7,1) {\bf Lasso};
\node[right] at (-7,0.5) {\bf LS Lasso};

\node[right] at (-7,-1.5) {\bf Lasso};
\node[right] at (-7,-2) {\bf LS Lasso};

\draw[black!80!white, line width = 1, dotted] (0,2.7) -- (0,3.8);
\draw[black!80!white, line width = 1, dotted] (0,0.2) -- (0,1.3);
\draw[black!80!white, line width = 1, dotted] (0,-2.3) -- (0,-1.2);

\end{tikzpicture}
\caption{Relative prediction and estimation errors of the Lasso~\eqref{eq:lasso} and the LS~Lasso (which is the corresponding LS refitted estimator~\eqref{eq:estimatorrefitted}) according to Table~\ref{tab:fo}.}\label{fig:foillu}
\end{figure}
\end{center}
\paragraph{First observations} The performances of the estimators for three sets of parameters are reported in Table~\ref{tab:fo}; the corresponding relative prediction and estimation errors are additionally plotted in Figure~\ref{fig:foillu}.\\
We first note that the variable selection performances of the Lasso and the LS~Lasso were equal for all sets of parameters. This is theoretically justified in Theorem~\ref{res:main} in the following section.\\
Let us now have a look at the prediction and estimation errors for the three sets of parameters. The set of parameters that corresponds to the top of Table~\ref{tab:fo} contains a small number of pertinent parameters $s$, a small level of correlation $\kappa$, and a large sample size $n$. Consistent with the literature, the Lasso performed well for both prediction and estimation~\cite{Buhlmann11}, and the LS~Lasso outmatched the Lasso regarding prediction~\cite{Belloni09}. Additionally, we observe that the LS Lasso also outmatched the Lasso regarding estimation.\\
The set of parameters that corresponds to the center of Table~\ref{tab:fo} contains a small number of pertinent parameters but highly correlated design matrices. We observe  that the Lasso performed only slightly better than the zero estimator for estimation but still exhibited good prediction performances, both consistent with the literature \cite{YoyoMomo12, vdGeer11}. Nevertheless, the Lasso was again outperformed by the LS Lasso, particularly for prediction.\\
Finally, the set of parameters that corresponds to the bottom of Table~\ref{tab:fo} contains a larger number of pertinent parameters and highly correlated design matrices. The Lasso performed well for prediction but only slightly better than the Zero estimator for estimation. More importantly, the LS Lasso performed worse than the Lasso. This was particularly the case for estimation, where the LS Lasso performed even worse than the Zero estimator.

\paragraph{Conclusions} The differences among the three sets of parameters  indicate that the benefit of least-squares refitting crucially depends on the setting. Moreover, the different results for prediction and estimation indicate that, for some settings, least-squares refitting may be beneficial for prediction but disadvantageous for estimation.\\
These observations motivate the remainder of this paper: In Section~\ref{sec:theory}, we present 
\begin{itemize}
\item bounds that hold for arbitrary designs and relate of the prediction and estimation errors of LS refitted estimators~\eqref{eq:estimatorrefitted} and the ones of the corresponding initial estimators~\eqref{eq:estimator}.
\end{itemize}
In Section~\ref{sec:criteria}, we introduce and study
\begin{itemize}
\item a criterion (based only on accessible quantities) to determine whether least-squares refitting is beneficial for a given setting and task.
\end{itemize}
All results are then summarized in Section~\ref{sec:discussion}.

\subsection{General error bounds for LS refitted estimators}\label{sec:theory}
In this section, we relate the errors of the LS~refitted estimators with the errors of the associated initial estimators. In particular, we introduce relations for both prediction and estimation that hold for arbitrary settings and are especially useful if the level of correlation in the design matrix is high. They indicate - in accordance with the simulations results in the previous section - that least-squares refitting is beneficial for some settings but disadvantageous for others.

\begin{theorem}\label{res:main} With probability one, the LS refitted estimator~\eqref{eq:estimatorrefitted} relates to the initial estimator~\eqref{eq:estimator} as follows: 
  \begin{equation*}
    \setref=\set,
  \end{equation*}
  \begin{align*}
 \|\estref-\est\|_q&=\|\grasetinv\sign(\estset)\|_q\frac{\lambda}{2g'\left(\|Y-X\est\|_2^2\right)}
  \end{align*}
for any $q\in(0,\infty]$, and
\begin{align*}
&\|X\estref-X\betatrue\|_2^2-\|X\est-X\betatrue\|_2^2\leq\|\grasetinv\sign(\estset)\|_1\frac{\lambda\sigma\|\matset^T\epsilon\|_\infty}{g'\left(\|Y-X\est\|_2^2\right)}.
\end{align*}
\end{theorem}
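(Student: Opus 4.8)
The plan is to derive all three claims from the first-order optimality conditions of the two optimization problems, treating the support identity $\setref=\set$ separately as the only genuinely probabilistic statement. I would first record the stationarity (subgradient) condition for the convex problem~\eqref{eq:estimator}. Since every entry of $\estset$ is nonzero by definition of $\set$, the subdifferential of the $\ell_1$-term on $\set$ reduces to $\sign(\estset)$, and stationarity reads $2g'(\|Y-X\est\|_2^2)\,\matset^T(Y-X\est)=\lambda\sign(\estset)$. Writing $X\est=\matset\estset$ this rearranges to
\begin{equation*}
\matset^TY-\graset\estset=\frac{\lambda}{2g'(\|Y-X\est\|_2^2)}\sign(\estset).
\end{equation*}

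For the norm identity I would then use the normal equations of the refitting step~\eqref{eq:estimatorrefitted}, namely $\matset^T(Y-\matset\estsetref)=0$, so that $\estsetref=\grasetinv\matset^TY$. Substituting the displayed expression for $\matset^TY$ gives $\estsetref=\estset+\grasetinv\sign(\estset)\lambda/(2g'(\|Y-X\est\|_2^2))$, and since both estimators vanish off $\set$ this is precisely $\estref-\est$; taking $\ell_q$-norms yields the stated identity for all $q\in(0,\infty]$.

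For the prediction inequality I would exploit that $\estset$ is feasible for the least-squares problem, hence $\|Y-X\estref\|_2^2\le\|Y-X\est\|_2^2$. Expanding both sides around $X\betatrue$ via $Y=X\betatrue+\sigma\epsilon$, the common term $\sigma^2\|\epsilon\|_2^2$ cancels and the inequality collapses to $\|X\estref-X\betatrue\|_2^2-\|X\est-X\betatrue\|_2^2\le 2\sigma\langle\epsilon,X(\estref-\est)\rangle$. Inserting $X(\estref-\est)=\matset\grasetinv\sign(\estset)\lambda/(2g'(\|Y-X\est\|_2^2))$ from the previous step and bounding the resulting inner product by H\"older's inequality ($\ell_\infty$ against $\ell_1$) reproduces exactly the right-hand side of the theorem.

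The main obstacle is the support claim $\setref=\set$, which is where the ``with probability one'' and the nonsingularity-of-noise assumption enter. The inclusion $\setref\subseteq\set$ is immediate from $\estref_{\set^c}=0$, so two things remain: that $\graset$ is invertible (so everything above is well defined) and that no coordinate of $\estsetref$ vanishes. For invertibility I would argue from the minimal-support convention for $\est$: if $\matset$ had a nontrivial kernel, moving $\est$ along a null direction leaves $X\est$ and hence the $g$-term unchanged, and since all entries of $\estset$ are nonzero the penalty is differentiable in this direction at $\est$, so optimality forces its derivative to vanish; the objective is then locally constant along the null direction and one can move until a coordinate hits zero, producing a solution with strictly smaller support, a contradiction. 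For the non-vanishing I would condition on each of the finitely many possible realizations $\set=S_0$: on that event $\estsetref=(X_{S_0}^TX_{S_0})^{-1}X_{S_0}^TY$, so each of its coordinates is a fixed nonzero linear functional of $Y$ and vanishes only on a proper subspace of $\rn$, an event of probability zero by nonsingularity; a union bound over the finitely many sets $S_0$ and coordinates closes the argument.
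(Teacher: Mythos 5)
Your proposal is correct and follows essentially the same route as the paper: the same KKT/stationarity identity combined with the normal equations of the refit gives the $\ell_q$ identity, the same expansion of $\|Y-X\estref\|_2^2\le\|Y-X\est\|_2^2$ around $X\betatrue$ plus H\"older gives the prediction bound, and the support claim is handled exactly as in the paper's two lemmas (invertibility of $\graset$ via the minimal-support contradiction along a kernel direction, and almost-sure non-vanishing of the refitted coordinates via a union bound over supports using the nonsingular-noise assumption). The only cosmetic difference is that you justify the kernel-direction step by stationarity and local constancy of the penalty rather than the paper's geometric argument on the $\ell_1$-ball, which is an equivalent variant of the same idea.
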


\begin{remark}[Generality] In contrast to the results for prediction in~\cite{Belloni09}, Theorem~\ref{res:main} holds for any design matrix $X$. In particular, it is not presumed that the correlations in the design matrix are bounded with respect to restricted eigenvalues or similar quantities. Correlations are common and deserve special attention: first, common measures for correlations such as 
restricted eigenvalues depend on the inaccessible sparsity level $s$ and can thus often not be evaluated or even reasonably bounded; it then remains unclear whether results that invoke such measures apply or not. Second, $\ell_1$-penalized methods can perform well even in highly correlated settings, see  Remark~\ref{rm:correlations} below and \cite{YoyoMomo12, vdGeer11}; therefore, a study of refitted estimators in such settings is also of interest.
\end{remark}

\begin{remark}[Oracle inequalities]
For low levels of correlation in the design matrix, the above relations can be seen as oracle inequalities, since the quantity $ \|\grasetinv\sign(\estset)\|_q$ can be bounded accordingly, see Remarks~\ref{rm:correlations} and~\ref{rm:tuningparameter}. 
These bounds for $ \|\grasetinv\sign(\estset)\|_q$ do not hold,  however, for higher levels of correlation; this reflects the possibly unfavorable effects of least-squares refitting.
\end{remark}

\begin{remark}[Different levels of correlation]\label{rm:correlations} Theorem~\ref{res:main} holds for any degree of correlation in the design matrix $X$. However, the correlations appear on the right hand sides of the second and third relation through the matrix $\grasetinv$ and influence therefore the bounds for the estimation and prediction errors of the LS refitted estimator:\\
If the design matrix is only very weakly correlated, LS refitting can be highly beneficial \cite{Belloni09}, since $\ell_1$-penalized methods are typically consistent in terms of variable selection in this case \cite{BuneaEN, Yoyo13, Karim08, BiYuConsistLasso}. This is also reflected to some extent in Theorem~\ref{res:main}: For example, the mutual coherence assumptions
\begin{equation*}
   \label{eq:mutcoh}
    |X_i^TX_j|\leq n/(2s)~\text{~~~for all~}i\in S,~ 1\leq j \leq p,~ j\neq i,
  \end{equation*}
ensures correct variable selection, that is $\set=S$, for the Lasso under weak conditions on the minimal nonzero entries of the regression vector $\betatrue$ \cite{BuneaEN, Karim08}. From Lemma~\ref{res:invertibility} in Section~\ref{sec:aux}, we can then deduce \begin{equation*}
  \|\grasetinv\sign(\estset)\|_q\leq 2s^{\frac 1 q}/n,
\end{equation*}
so that the bounds in Theorem~\ref{res:main} match the well-known ``fast rate" bounds for the prediction and estimation errors for weakly correlated design matrices \cite{Bickel09, Belloni11}. Theorem~\ref{res:main} implies therefore that the LS refitted estimator performs at least as well as the initial estimator in this case. We note, however, that - under such strict conditions on the correlations in the design matrix - more favorable bounds for the prediction errors of LS~refitted estimators are known~\cite{Belloni09}.\\
The situation is more involved, for both the initial estimator and the LS refitted estimator, if the correlations are not small. The initial estimator does not necessarily provide consistent variable selection in this case, and the fast rates bounds for the estimation and prediction errors do also not apply. Nevertheless, the initial estimator  can still have favorable properties, especially regarding prediction. This is reflected, for example, in so called ``slow rate" bounds for the prediction errors that involve smaller tuning parameters $\lambda$ but are proportional to $\lambda\|\betatrue\|_1$ instead of $s\lambda^2$ \cite{YoyoMomo12} (the term ``slow rate" is somewhat misleading, since the corresponding rates are not necessarily slow). According to Theorem~\ref{res:main}, the LS refitted estimator fulfills such slow rate bounds if
\begin{equation*}
 \lambda  \|\grasetinv\sign(\estset)\|_1\approx~\|\betatrue\|_1.
\end{equation*}
If instead   
\begin{equation*}
 \lambda  \|\grasetinv\sign(\estset)\|_1\gg \|\betatrue\|_1,
\end{equation*}
the bound for the prediction error for the LS refitted estimator is larger than the corresponding bounds for the initial estimator. This is consistent with the observations made in the previous section: The LS refitted estimator may be beneficial even in highly correlated settings, but this is not always the case, especially if the estimated sparsity level $\setnor$ is not very small (and thus $\grasetinv$ is not a very small matrix).
\end{remark}

\begin{remark}[Tuning parameters]\label{rm:tuningparameter}
The tuning parameters $\lambda$ typically fulfill
  \begin{equation*}
    \label{eq:choicelambda}
 \lambda\gtrsim g'\left(\sigma\|\epsilon\|_2^2\right)\sigma\|X^T\epsilon\|_\infty 
  \end{equation*}
with high probability \cite{Bickel09,Belloni11,Buhlmann11}. Moreover, the estimator $X\est$ is typically consistent for $X\betatrue$ under weak assumptions (for the Lasso, see for example \cite{HCB08,Kolt10,MM11,RigTsy11}). Then,
\begin{equation*}
  \frac{\lambda}{g'\left(\|Y-X\est\|_2^2\right)}\approx \frac{\lambda}{g'\left(\sigma\|\epsilon\|_2^2\right)}
\end{equation*}
 and
\begin{equation*}
  \frac{\lambda\sigma\|X^T\epsilon\|_\infty}{g'\left(\|Y-X\est\|_2^2\right)}\lesssim \frac{\lambda^2}{\left(g'\left(\sigma\|\epsilon\|_2^2\right)\right)^2}.
\end{equation*}  
\end{remark}

\begin{remark}[Infeasible criteria]
  The two last relations in Theorem~\ref{res:main} indicate potential pitfalls of least-squares refitting:  One the one hand, if the last term of a relation is small, the LS~refitted estimators  perform - for the task corresponding to the relation -  at least comparable to the associated initial estimators and can therefore be used safely. On the other hand, if the last term of a relation is large, the LS~refitted estimators can perform considerably worse than the associated initial estimators, and therefore, their use is perhaps not advisable. One might thus be tempted to use the ratios of the two terms on the right hand sides of the relations as criteria for the application of least-squares refitting; unfortunately, while the quantity $ \|\grasetinv\sign(\estset)\|_q$ can be readily computed, the errors of the initial estimators cannot be computed or reasonably bounded since that would involve inaccessible parameters such as $\betatrue$ or $S$ (cf. Section~\ref{sec:fo}).
\end{remark}

\begin{remark}[Alternative refitting procedures]\label{rm:othermethods}
Besides the LS refitting estimators considered in this paper, the Adaptive Lasso and the Thresholded Lasso are well known multistage procedures, see \cite{Candes08, Meinsthresh09, SaraShu11} and references therein. In contrast to the LS refitting estimator, these procedures are designed to diminish the false positives of the Lasso.  In \cite{SaraShu11}, the Adaptive Lasso and the Thresholded Lasso are compared with the Lasso by means of involved oracle inequalities for settings with sufficiently small levels of correlation. The authors conclude that the Adaptive Lasso and the Thresholded Lasso perform in these settings comparably to the Lasso in terms of prediction and estimation but can outperform the Lasso in terms of variable selection.
\end{remark}

\subsection{A criterion}\label{sec:criteria} In this section, we introduce a criterion to distinguish between settings that are suited for least-squares refitting and settings that are problematic for least-squares refitting. From this criterion originates an estimator that adopts the outcome of - depending on the value of the criterion - either the initial estimator or the LS refitted estimator. For the case of the Lasso~\eqref{eq:lasso} and the corresponding LS refitted estimator~\eqref{eq:estimatorrefitted}, we perform a numerical study, which indicates that the estimator based on the criterion performs close to the better one of the two original estimators.

\paragraph{The criterion} Our goal is to detect whether least-squares refitting is beneficial in a setting under consideration. For this, we introduce the criterion~$F$ as a random variable that compares the signs of the vectors $\estset$ and $\grasetinv\sign(\estset)$:
\begin{equation}\label{eq:criterion}
  F(\set):=\frac{1}{\setnor}\left|\left\{j\in\set:\sign(\est_j)\neq \sign\left((\grasetinv\sign(\estset))_j\right)\right\}\right|\in[0,1].
\end{equation}
The involvement of the vector $\grasetinv\sign(\estset)$ is no surprise regarding the theoretical results in Section~\ref{sec:theory}; the criterion is further motivated later, in Remark~\ref{rm:heuristics} in Section~\ref{sec:proofs}, so that we can focus  on its practical aspects in the following. Least-squares refitting is claimed to be beneficial if the criterion is small and disadvantageous if the criterion is large. The criterion depends only on $\est$ and $X$ and especially not on possibly inaccessible parameters such as $s$ and $\sigma$. Its computation is undemanding since only a matrix inversion of a regular and typically small matrix is needed.\\
The criterion prompts the application of the $c$-LS refitted estimator 
\begin{equation}
  \label{eq:estimatorcriterion}
  \estcri:=\begin{cases}\estref&\text{~~~~if $F(\set)\leq c$}\\\est&\text{~~~~otherwise}\end{cases}
\end{equation}
for a fixed $c\in [0,1]$. We note that the $c$-LS refitted estimator invokes the LS refitted estimator~\eqref{eq:estimatorrefitted} if the criterion suggests that the setting is suitable for it and invokes the initial estimator~\eqref{eq:estimator} otherwise. The parameter~$c$ is the corresponding threshold: the larger the value of $c$, the more likely refitting is applied. The findings in Section~\ref{sec:fo} (and also of the remainder of this section) indicate that the threshold should be lower for prediction than for estimation. Apart from this, we consider fixed, constant values for $c$ (so that especially no parameter tuning is involved).

\paragraph{Setting and computation} We studied the Lasso \eqref{eq:lasso} and the corresponding LS refitted estimator \eqref{eq:estimatorrefitted}, which we call LS Lasso, and the $c$-LS refitted estimator \eqref{eq:estimatorcriterion}, which we call $c$-LS Lasso. As thresholds, we used for all simulations the constants $c=0.4$ for prediction and $c=0.2$ for estimation. We invoked settings as described in Section~\ref{sec:fo}. The computations were also done as described in Section~\ref{sec:fo}, except for the additional tracking of the application of least-squares refitting through 
\begin{align*}
  \text{LS pred.}&:=\frac{\text{number of repetitions involving refitting for prediction}}{\text{repetitions}}\\
\text{LS est.}&:=\frac{\text{number of repetitions involving refitting for estimation}}{\text{repetitions}}
\end{align*}
for each estimator. The Lasso and the Zero estimator \eqref{eq:zero} do never involve least-squares refitting (LS pred.=LS est.=$0$), the LS Lasso always involves least-squares refitting (LS pred.=LS est.=$1$), and the number of applications where least-squares refitting is applied in the $c$-LS Lasso depends on the values of the criterion~\eqref{eq:criterion} and the threshold $c$.

\paragraph{Results} We considered the three sets of parameters  used in Section~\ref{sec:fo} and three additional ones. The corresponding performances of the estimators are reported in Table~\ref{tab:data}; the relative prediction and estimation errors are additionally plotted in Figure~\ref{fig:dataillu}.\\
Comparing the performances of the Lasso, the LS Lasso, and the $c$-LS Lasso with the performances of the Zero estimator, differences between the tasks become visible: On the one hand, all three estimators exhibited good prediction performances for all sets of parameters (see the first column in Table~\ref{tab:data}). On the other hand, their success with respect to estimation and variable selection depended on the parameters, especially on the levels of correlation $\kappa$ (columns two, three, and four). We note again that the levels of correlation are known in these simulations but are usually not accessible in practice.\\ 
For small sparsity levels $s$ (see the sets of parameters  one and two as counted from the top in Table~\ref{tab:data} and Figure~\ref{fig:dataillu}) or small to medium levels of correlation (sets of parameters one and five), the least-squares refitting was beneficial for prediction and beneficial or only slightly disadvantageous for estimation. In contrast, for both larger sparsity levels and larger levels of correlation (sets of parameters three and six), the least-squares refitting was disadvantageous for both prediction and estimation. One also observes that the least-squares refitting was beneficial for a wider range of parameters regarding prediction as opposed to estimation (sets of parameters four, for example).\\
For all sets of parameters, the fractions of the application of least-squares refitting for the $c$-LS Lasso (last column) reflected the benefit of least-squares refitting. The $c$-LS Lasso was therefore close to either the Lasso or the LS~Lasso - whichever was better for the setting and the task under consideration.

\paragraph{Conclusions} The numerical results support the conclusion of Section~\ref{sec:fo} that the benefit of least-squares refitting  crucially depends on the setting and the task under consideration.\\
Additionally, the results suggest that the criterion~\eqref{eq:criterion} can serve as a tool to determine the usefulness - or disutility - of least-squares refitting.

\subsection{Summary}\label{sec:discussion}
Both the theoretical and the numerical findings in Section~\ref{sec:main} indicate that whether least-squares refitting should be applied depends on the setting and the task under consideration. First, the relations in Theorem~\ref{res:main} (see Section~\ref{sec:theory}) and the simulation results in Table~\ref{tab:data} and Figure~\ref{fig:dataillu}  (see Section~\ref{sec:criteria}) confirm the usefulness of least-squares refitting for prediction with mildly correlated design matrices, cf.~\cite{Belloni09}.  The relations and simulation results additionally demonstrate that least-squares refitting can be advantageous for prediction with highly correlated design matrices and for estimation. However, they also reveal that least-squares refitting is problematic if the design matrix is correlated and the sparsity level is considerably larger than $1$. Moreover, the different simulation results for prediction and estimation, see Table~\ref{tab:data} and Figure~\ref{fig:dataillu}, indicate that least-squares refitting is more beneficial for prediction than for estimation; this is not surprising in view of its definition (see~\eqref{eq:estimatorrefitted} in the introduction), which reflects that least-squares refitting was designed for reducing prediction biases. Finally, the good performances of the $c$-LS Lasso in the simulation study in Section~\ref{sec:criteria} suggest the use of the random variable $F$ (see~\eqref{eq:criterion} in Section~\ref{sec:criteria}) as a criterion for the application of least-squares refitting.\\

We note that the simulations were restricted to the Lasso~\eqref{eq:lasso} and to a standard normal distribution for the noise vector $\epsilon$ in the model~\eqref{eq:model}, since they are by far the most common estimator and distribution of the noise vector, respectively, in our framework. Simulations for other estimators and distributions of the noise vector are of interest but beyond the scope of this paper.

\begin{table}
\begin{tabular}{l c c c c c}
 & \bf{pred. error} & \bf est. error & \bf false neg. & \bf false pos. & \bf LS pred./est.\\ 
\multicolumn{6}{c}{~}\\
\multicolumn{6}{c}{$n=1000,~p=1000,~\sigma=0.3,~s=2,~\kappa=0$}\vspace{0.2cm}\\
{\bf Lasso} & $(2.87 \pm 0.04)\times 10^{\text -3}$ & $(5.28\pm 0.03)\times 10^{\text -2}$ & $0.00\pm 0.00$ & $0.11\pm 0.01$ &  0/0 \\ 
{\bf LS Lasso}& $(0.34 \pm 0.02)\times 10^{\text -3}$ & $(1.50\pm 0.04)\times 10^{\text -2}$ & $0.00\pm 0.00$ & $0.11\pm 0.01$ &  1/1 \\
{\bf $c$-LS Lasso}& $(0.34 \pm 0.02)\times 10^{\text -3}$ & $(1.50\pm 0.04)\times 10^{\text -2}$ & $0.00\pm 0.00$ & $0.11\pm 0.01$ &  1/1\\
{\bf Zero}&$1.25\pm 0.01$&$1.12$&2&0&0/0\\
~&~&~&~&~&~\\
\multicolumn{6}{c}{~}\\
\multicolumn{6}{c}{$n=100,~p=1000,~\sigma=0.3,~s=2,~\kappa=0.9$}\vspace{0.2cm}\\
{\bf Lasso} & $(2.31 \pm 0.03)\times 10^{\text -2}$ & $(9.38\pm 0.07)\times 10^{\text -1}$ & $0.87\pm 0.02$ & $11.6\pm 0.2$ & 0/0 \\ 
{\bf LS Lasso}& $(1.06 \pm 0.02)\times 10^{\text -2}$ & $(9.22\pm 0.07)\times 10^{\text -1}$ & $0.87\pm 0.02$ & $11.6\pm 0.2$ & 1/1 \\
{\bf $c$-LS Lasso}& $(1.06 \pm 0.02)\times 10^{\text -2}$ & $(9.26\pm 0.07)\times 10^{\text -1}$ & $0.87\pm 0.02$ & $11.6\pm 0.2$ & 0.997/0.741\\
{\bf Zero}&$2.24\pm 0.01$&$1.12$&2&0&0/0\\
~&~&~&~&~&~\\ 
\multicolumn{6}{c}{~}\\
\multicolumn{6}{c}{$n=100,~p=1000,~\sigma=0.3,~s=20,~\kappa=0.9$}\vspace{0.2cm}\\
{\bf Lasso} & $(5.10 \pm 0.04)\times 10^{\text -2}$ & $2.19\pm 0.02$ & $8.29\pm 0.08$ & $50.9\pm 0.4$ & 0/0 \\ 
{\bf LS Lasso}& $(5.67 \pm 0.04)\times 10^{\text -2}$ & $3.51\pm 0.04$ & $8.29\pm 0.08$ & $50.9\pm 0.4$ & 1/1 \\
{\bf $c$-LS Lasso}& $(5.15 \pm 0.04)\times 10^{\text -2}$ & $2.19\pm 0.02$ & $8.29\pm 0.08$ & $50.9\pm 0.4$ & 0.12/0\\
{\bf Zero}&$108.99\pm 0.01$&$2.68$&20&0&0/0\\
~&~&~&~&~&~\\
\multicolumn{6}{c}{~}\\
\multicolumn{6}{c}{$n=100,~p=1000,~\sigma=0.1,~s=10,~\kappa=0.9$}\vspace{0.2cm}\\
{\bf Lasso} & $(5.65 \pm 0.07)\times 10^{\text -3}$ & $(7.31\pm 0.08)\times 10^{\text -1}$ & $1.75\pm 0.03$ & $40.5\pm 0.5$ &  0/0 \\ 
{\bf LS Lasso}& $(5.31 \pm 0.05)\times 10^{\text -3}$ & $(9.66\pm 0.12)\times 10^{\text -1}$ & $1.75\pm 0.03$ & $40.5\pm 0.5$ &  1/1 \\
{\bf $c$-LS Lasso}& $(5.60 \pm 0.06)\times 10^{\text -3}$ & $(7.32\pm 0.08)\times 10^{\text -1}$ & $1.75\pm 0.03$ & $40.5\pm 0.5$ &  0.36/0.013\\
{\bf Zero}&$29.92\pm 0.01$&$1.96$&10&0&0/0\\
~&~&~&~&~&~\\
\multicolumn{6}{c}{~}\\
\multicolumn{6}{c}{$n=1000,~p=1000,~\sigma=0.1,~s=20,~\kappa=0.5$}\vspace{0.2cm}\\
{\bf Lasso} & $(9.53 \pm 0.10)\times 10^{\text -4}$ & $(4.01\pm 0.03)\times 10^{\text -2}$ & $0.00\pm 0.00$ & $49.8\pm 0.5$ & 0/0 \\ 
{\bf LS Lasso}& $(8.46 \pm 0.06)\times 10^{\text -4}$ & $(4.20\pm 0.02)\times 10^{\text -2}$ & $0.00\pm 0.00$ & $49.8\pm 0.5$ & 1/1 \\
{\bf $c$-LS Lasso}& $(8.46 \pm 0.06)\times 10^{\text -4}$ & $(4.16\pm 0.03)\times 10^{\text -2}$ & $0.00\pm 0.00$ & $49.8\pm 0.5$ & 1/0.266\\
{\bf Zero}&$58.62\pm 0.04$&$2.68$&20&0&0/0\\
~&~&~&~&~&~\\
\multicolumn{6}{c}{~}\\
\multicolumn{6}{c}{$n=100,~p=1000,~\sigma=1,~s=60,~\kappa=0.9$}\vspace{0.2cm}\\
{\bf Lasso} & $(5.39\pm 0.04)\times 10^{\text -1}$ & $\textcolor{white}{0}6.21\pm 0.02$ & $38.9\pm 0.2$ & $52.1\pm 0.4$ & 0/0 \\ 
{\bf LS Lasso}& $(6.94 \pm 0.04)\times 10^{\text -1}$ & $13.11\pm 0.25$ & $38.9\pm 0.2$ & $52.1\pm 0.4$ & 1/1 \\
{\bf $c$-LS Lasso}& $(5.44 \pm 0.04)\times 10^{\text -1}$ & $\textcolor{white}{0}6.21\pm 0.02$ & $38.9\pm 0.2$ & $52.1\pm 0.4$ & 0.039/0\\
{\bf Zero}&$919.02\pm 0.06$&$4.53$&60&0&0/0\\
~&~&~&~&~&~\\
\end{tabular}
\caption{Prediction, estimation, and variable selection performances of the Lasso~\eqref{eq:lasso}, the LS~Lasso (which is the corresponding LS refitted estimator~\eqref{eq:estimatorrefitted}), the $c$-LS Lasso (which is the corresponding $c$-LS refitted estimator~\eqref{eq:estimatorcriterion}), and the Zero estimator $0\in\rp$. The setting and computation are detailed in the section. For a plot of the relative prediction and estimation errors, see Figure~\ref{fig:dataillu} below.}\label{tab:data}
\end{table}

\begin{center}
\begin{figure}[h!]
  \begin{tikzpicture}{scale=0.1}

\node[align=center] at (0,5) {\normalfont{0}};
\node[align=center] at (4,5) {$100 \%$};
\node[align=center] at (-4,5) {$100 \%$};

\node[right] at (0.6,5.5) {\bf relative estimation errors $\rightarrow$};
\node[left] at (-0.6,5.5) {\bf $\leftarrow$  relative prediction errors};

\node[align=center] at (0,4) {$n=1000,~p=1000,~\sigma=0.3,~s=2,~\kappa=0$};
\node[align=center] at (0,1.5) {$n=100,~p=1000,~\sigma=0.3,~s=2,~\kappa=0.9$};
\node[align=center] at (0,-1) {$n=100,~p=1000,~\sigma=0.3,~s=20,~\kappa=0.9$};
\node[align=center] at (0,-3.5) {$n=100,~p=1000,~\sigma=0.1,~s=10,~\kappa=0.9$};
\node[align=center] at (0,-6) {$n=1000,~p=1000,~\sigma=0.1,~s=20,~\kappa=0.5$};
\node[align=center] at (0,-8.5) {$n=100,~p=1000,~\sigma=1,~s=60,~\kappa=0.9$};

\draw [red!70!white, line width=1] (-4,3) -- (4,3);
\draw [blue!60!white, line width=1] (-4,2.5) -- (4,2.5);

\draw [red!70!white, line width=10] (0,3) -- (-0.47,3);
\draw [black!60!white!, line width=10] (0,3.5) -- (-4,3.5);
\draw [red!70!white, line width=10] (1.14,3) -- (-0.1,3);
\draw [black!60!white, line width=10] (4,3.5) -- (-0.1,3.5);
\draw [blue!70!white, line width=10] (1.14,2.5) -- (-0.1,2.5);
\draw [blue!70!white, line width=10] (0,2.5) -- (-0.47,2.5);

\draw [red!70!white, line width=1] (-4,0.5) -- (4,0.5);
\draw [blue!60!white, line width=1] (-4,0) -- (4,0);

\draw [red!70!white, line width=10] (0,0.5) -- (-1.84,0.5);
\draw [black!60!white, line width=10] (0,1) -- (-4,1);
\draw [red!70!white, line width=10] (--3.93,0.5) -- (-0.1,0.5);
\draw [black!60!white, line width=10] (--4,1) -- (-0.1,1);
\draw [blue!70!white, line width=10] (--3.95,0) -- (-0.1,0);
\draw [blue!70!white, line width=10] (0,0) -- (-1.84,0);

\draw [black!60!white, line width=1] (-4,-1.5) -- (4,-1.5);
\draw [blue!60!white, line width=1] (-4,-2.5) -- (4,-2.5);

\draw [red!70!white, line width=10] (-0,-2) -- (-4,-2);
\draw [black!60!white, line width=10] (-0,-1.5) -- (-3.6,-1.5);
\draw [red!70!white, line width=10] (--4,-2) -- (-0.1,-2);
\draw [black!60!white, line width=10] (--2.5,-1.5) -- (-0.1,-1.5);
\draw [blue!70!white, line width=10] (--2.5,-2.5) -- (-0.1,-2.5);
\draw [blue!70!white, line width=10] (-0,-2.5) -- (-3.6,-2.5);

\draw [black!60!white, line width=1] (4,-4) -- (-0,-4);
\draw [red!70!white, line width=1] (-0,-4.5) -- (-4,-4.5);
\draw [blue!60!white, line width=1] (4,-5) -- (-4,-5);

\draw [red!70!white, line width=10] (-0,-4.5) -- (-3.76,-4.5);
\draw [black!60!white, line width=10] (-0,-4) -- (-4,-4);
\draw [red!70!white, line width=10] (--4,-4.5) -- (-0.1,-4.5);
\draw [black!60!white, line width=10] (--3.03,-4) -- (-0.1,-4);
\draw [blue!70!white, line width=10] (--3.03,-5) -- (-0.1,-5);
\draw [blue!70!white, line width=10] (-0,-5) -- (-3.96,-5);

\draw [black!60!white, line width=1] (--4,-6.5) -- (-0,-6.5);
\draw [red!70!white, line width=1] (--0,-7) -- (-4,-7);
\draw [blue!60!white, line width=1] (--4,-7.5) -- (-4,-7.5);

\draw [red!70!white, line width=10] (-0,-7) -- (-3.55,-7);
\draw [black!60!white, line width=10] (-0,-6.5) -- (-4,-6.5);
\draw [red!70!white, line width=10] (--4,-7) -- (-0.1,-7);
\draw [black!60!white, line width=10] (--3.82,-6.5) -- (-0.1,-6.5);
\draw [blue!70!white, line width=10] (--3.96,-7.5) -- (-0.1,-7.5);
\draw [blue!70!white, line width=10] (-0,-7.5) -- (-3.55,-7.5);


\draw [black!60!white, line width=1] (--4,-9) -- (-4,-9);
\draw [blue!60!white, line width=1] (--4,-10) -- (-4,-10);

\draw [red!70!white, line width=10] (-0,-9.5) -- (-4,-9.5);
\draw [black!60!white, line width=10] (-0,-9) -- (-3.11,-9);
\draw [red!70!white, line width=10] (--4,-9.5) -- (-0.1,-9.5);
\draw [black!60!white, line width=10] (--1.89,-9) -- (-0.1,-9);
\draw [blue!70!white, line width=10] (--1.89,-10) -- (-0.1,-10);
\draw [blue!70!white, line width=10] (-0,-10) -- (-3.14,-10);

\node[right] at (-7,3.5) {\bf Lasso};
\node[right] at (-7,3) {\bf LS Lasso};
\node[right] at (-7,2.5) {\bf $c$-LS Lasso};

\node[right] at (-7,1) {\bf Lasso};
\node[right] at (-7,0.5) {\bf LS Lasso};
\node[right] at (-7,0) {\bf $c$-LS Lasso};

\node[right] at (-7,-1.5) {\bf Lasso};
\node[right] at (-7,-2) {\bf LS Lasso};
\node[right] at (-7,-2.5) {\bf $c$-LS Lasso};

\node[right] at (-7,-4) {\bf Lasso};
\node[right] at (-7,-4.5) {\bf LS Lasso};
\node[right] at (-7,-5) {\bf $c$-LS Lasso};

\node[right] at (-7,-6.5) {\bf Lasso};
\node[right] at (-7,-7) {\bf LS Lasso};
\node[right] at (-7,-7.5) {\bf $c$-LS Lasso};

\node[right] at (-7,-9) {\bf Lasso};
\node[right] at (-7,-9.5) {\bf LS Lasso};
\node[right] at (-7,-10) {\bf $c$-LS Lasso};

\draw[black!80!white, line width = 1, dotted] (0,2.2) -- (0,3.8);
\draw[black!80!white, line width = 1, dotted] (0,-0.3) -- (0,1.3);
\draw[black!80!white, line width = 1, dotted] (0,-2.8) -- (0,-1.2);
\draw[black!80!white, line width = 1, dotted] (0,-5.3) -- (0,-3.7);
\draw[black!80!white, line width = 1, dotted] (0,-7.8) -- (0,-6.2);
\draw[black!80!white, line width = 1, dotted] (0,-10.3) -- (0,-8.7);

\end{tikzpicture}
\caption{Relative prediction and estimation errors of the Lasso~\eqref{eq:lasso}, the LS~Lasso (which is the corresponding LS refitted estimator~\eqref{eq:estimatorrefitted}), and the $c$-LS Lasso (which is the corresponding $c$-LS refitted estimator~\eqref{eq:estimatorcriterion}) according to Table~\ref{tab:data}.}\label{fig:dataillu}
\end{figure}
\end{center}

\clearpage

\section{Proofs and further remarks}\label{sec:proofs}

\subsection{Auxiliary results}\label{sec:aux}

\begin{lemma}\label{res:lassoinvertible}
  The design matrix $X$ restricted to the active set $\set$ of the estimator~\eqref{eq:estimator} has full rank. In other words: The matrix $\graset\in\R^{\setnor\times \setnor}$ is invertible.
\end{lemma}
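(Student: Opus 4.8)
The plan is to argue by contradiction, using the fact that $\est$ is chosen among all minimizers of~\eqref{eq:estimator} to have a minimal number of nonzero entries. Suppose $\matset$ did \emph{not} have full column rank. Then there exists a nonzero $\gamma\in\rset$ with $\matset\gamma=0$. I would lift $\gamma$ to a vector $\widetilde\gamma\in\rp$ that equals $\gamma$ on $\set$ and vanishes off $\set$, and study the line $\beta(t):=\est+t\widetilde\gamma$, $t\in\R$. Since $X\beta(t)=X\est+t\matset\gamma=X\est$, the fit is unchanged along the line, so the smooth term $g(\|Y-X\beta(t)\|_2^2)$ is constant in $t$. Because $\est=\beta(0)$ is a global minimizer of~\eqref{eq:estimator} and the smooth term is constant on the line, $t=0$ must minimize the convex, piecewise-linear function $\phi(t):=\|\beta(t)\|_1=\sum_{j\in\set}|\est_j+t\widetilde\gamma_j|$.

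First I would analyze $\phi$ near $t=0$. Since $\est_j\neq0$ for every $j\in\set$, for small $|t|$ one has $\sign(\est_j+t\widetilde\gamma_j)=\sign(\est_j)$, so $\phi$ is linear near $0$ with slope $\sum_{j\in\set}\sign(\est_j)\widetilde\gamma_j=\sign(\estset)^T\gamma$. Were this slope nonzero, moving $t$ slightly in the direction of decrease would strictly lower $\phi$, hence the objective, contradicting optimality of $\est$; therefore the slope is $0$ and $\phi$ is locally constant at $0$.

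Next I would extend this flatness to the first breakpoint. The kinks of $\phi$ occur only at the finitely many points $-\est_j/\widetilde\gamma_j$ with $j\in\set$ and $\widetilde\gamma_j\neq0$; since $\gamma\neq0$ at least one such point is finite and (as $\est_j\neq0$) nonzero, and after replacing $\gamma$ by $-\gamma$ if needed I may assume one of them is positive, so the nearest breakpoint $t^*>0$ to the right of $0$ is finite. On the open interval between consecutive breakpoints that contains $0$ the slope of $\phi$ is constant and therefore equal to the local value $0$; hence $\phi$ is constant, equal to $\|\est\|_1$, on $[0,t^*]$. At $t^*$ at least one coordinate $\est_j+t^*\widetilde\gamma_j$ vanishes, while no coordinate outside $\set$ is ever activated because $\widetilde\gamma$ is supported on $\set$. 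Consequently $\beta(t^*)$ is again a minimizer of~\eqref{eq:estimator}, but with strictly fewer nonzero entries than $\est$, contradicting the minimality built into the definition of $\est$. This contradiction forces $\matset$ to have full column rank, equivalently $\graset$ to be invertible.

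The step I expect to be most delicate is the last one: one must be sure that the flat stretch of $\phi$ actually reaches a breakpoint at which a coordinate is annihilated, rather than merely a point where $\phi$ begins to increase. This is exactly where convexity together with the vanishing local slope is used — convexity forces the slope to stay at its minimal value $0$ across the whole linear piece, so the endpoint breakpoint is itself a minimizer and yields the smaller support. It is worth noting that this argument relies only on the convexity of the $\ell_1$-term and on the constancy of the fit along the line; neither the strict positivity of $g'$ nor the nonsingularity assumption on the noise is needed here.
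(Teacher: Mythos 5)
Your proof is correct and follows essentially the same route as the paper's: both argue by contradiction, move $\est$ along a null direction of $\matset$ supported on $\set$ (so the fit, and hence the $g$-term, is unchanged), locate a point on that line where a coordinate vanishes while the $\ell_1$-norm does not exceed $\|\est\|_1$, and contradict the convention that $\est$ has a minimal number of nonzero entries among all minimizers. The only difference is how that point is produced: the paper invokes a purely geometric fact about a line through the boundary of the $\ell_1$-ball (its ``simple geometric considerations''), whereas you derive it from global optimality (zero local slope of the piecewise-linear $\ell_1$-term) combined with convexity, which is a legitimate, more explicit way of filling in the same step.
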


\begin{lemma}\label{res:lsfitting}
Let $A\subset\{1,\dots,p\}$ be a (possibly random) nonempty set such that $X_A^TX_A\in \R^{|A|\times|A|}$ is invertible and let 
\begin{equation*}
    \estref_A:=\argmin_{\beta\in\R^{|A|}}\|Y-X_A\beta\|_2^2
\end{equation*}
be the least-squares estimator on the set $A$. Then, $S(\estref_A)=A$ with probability one.
\end{lemma}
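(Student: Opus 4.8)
The plan is to combine the closed-form expression for the least-squares estimator with the nonsingularity assumption on $Y$. First I would fix an admissible realization of $A$, that is, a nonempty set with $X_A^TX_A$ invertible. Invertibility makes the minimizer unique, with the explicit form $\estref_A=(X_A^TX_A)^{-1}X_A^TY$. Since $\estref_A\in\R^{|A|}$, the identity $S(\estref_A)=A$ is equivalent to the statement that every coordinate of $\estref_A$ is nonzero, so it suffices to bound, for each coordinate index $j$, the probability of the event $(\estref_A)_j=0$.

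For fixed $A$ and $j$, I would write $(\estref_A)_j=w_{A,j}^TY$ with $w_{A,j}:=X_A(X_A^TX_A)^{-1}e_j\in\rn$, where $e_j$ is the $j$-th standard basis vector. The crucial point is that $w_{A,j}\neq 0$: the inverse matrix is injective, so $(X_A^TX_A)^{-1}e_j\neq 0$, and invertibility of $X_A^TX_A$ forces $X_A$ to have full column rank, so $X_A$ maps this nonzero vector to a nonzero vector. Hence $U_{A,j}:=\{y\in\rn:w_{A,j}^Ty=0\}$ is a hyperplane, that is, a proper subspace of $\rn$, and it does not depend on $Y$. The nonsingularity assumption then yields $\mpr(Y\in U_{A,j})=0$, so $(\estref_A)_j=0$ with probability zero for each fixed pair $(A,j)$.

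The main obstacle is that $A$ is allowed to be random, so the subspace one wants to use is itself random, whereas the nonsingularity assumption only applies to nonrandom subspaces. I would resolve this by observing that $A$ can take only finitely many values: let $\mathcal{A}$ be the collection of nonempty subsets of $\{1,\dots,p\}$ for which $X_A^TX_A$ is invertible. The bad event $\{S(\estref_A)\neq A\}$ is then contained in the finite union $\bigcup_{A\in\mathcal{A}}\bigcup_{j=1}^{|A|}\{Y\in U_{A,j}\}$ of nonrandom hyperplanes, each of which is a null event by the preceding step. A finite union of null sets is null, so $\mpr(S(\estref_A)\neq A)=0$, which gives $S(\estref_A)=A$ with probability one and completes the argument.
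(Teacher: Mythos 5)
Your proposal is correct and takes essentially the same route as the paper's proof: both reduce the bad event $\{S(\estref_A)\neq A\}$ to the event that $Y$ lies in one of finitely many \emph{nonrandom} proper subspaces of $\rn$ (handling the randomness of $A$ by a union bound over all nonempty sets with $X_A^TX_A$ invertible, and over the coordinates $j$), and then invoke the nonsingularity assumption on the noise. The only difference is in how the null subspace is exhibited: you identify it algebraically, via the closed form $\estref_A=(X_A^TX_A)^{-1}X_A^TY$, as the hyperplane $\{y\in\rn: w_{A,j}^Ty=0\}$ with $w_{A,j}=X_A(X_A^TX_A)^{-1}e_j$, whereas the paper identifies it geometrically, via orthogonal projections, as the direct sum of the orthogonal complement of the column span of $X_A$ and the span of the columns with indices in $A\setminus\{j\}$ --- and these two subspaces are in fact identical.
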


\begin{lemma}\label{res:kkt}
The vector $\widehat\beta$ is a solution of the criterion \eqref{eq:estimator} if and only if for every  $1\leq j\leq p$
\begin{align*}
&\widehat\beta_j\neq 0 \Rightarrow (X^T(Y-X\widehat\beta))_j=\frac{\lambda}{2g'\left(\|Y-X\est\|_2^2\right)}\sign(\est_j),\\
&\widehat\beta_j= 0\Rightarrow |(X^T(Y-X\est))_j|\leq \frac{\lambda }{ 2g'\left(\|Y-X\est\|_2^2\right)}. 
\end{align*}
\end{lemma}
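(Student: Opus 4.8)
The plan is to read these two conditions as the standard first-order (subgradient) optimality conditions for the convex program~\eqref{eq:estimator}. Write $F(\beta) := g\!\left(\|Y-X\beta\|_2^2\right)+\lambda\|\beta\|_1$. By Lemma~\ref{res:convex}, $F$ is convex, so a vector $\est$ is a global minimizer of $F$ if and only if $0\in\partial F(\est)$. Hence the whole lemma reduces to writing the condition $0\in\partial F(\est)$ out coordinatewise.

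To do this, I would split $F$ into its smooth and nonsmooth parts and compute the subdifferential of each. The map $\beta\mapsto\|Y-X\beta\|_2^2$ is quadratic, hence everywhere differentiable with gradient $-2X^T(Y-X\beta)$; since $g$ is differentiable (being convex with a well-defined, strictly positive derivative $g'$), the chain rule shows that $\beta\mapsto g\!\left(\|Y-X\beta\|_2^2\right)$ is differentiable with gradient $-2g'\!\left(\|Y-X\beta\|_2^2\right)X^T(Y-X\beta)$. The penalty $\lambda\|\cdot\|_1$ is convex with the familiar coordinatewise subdifferential: the $j$-th entry equals $\lambda\sign(\beta_j)$ when $\beta_j\neq0$ and ranges over $[-\lambda,\lambda]$ when $\beta_j=0$. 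Because the first summand is differentiable everywhere, the subdifferential sum rule applies cleanly and yields
\[
\partial F(\beta)=-2g'\!\left(\|Y-X\beta\|_2^2\right)X^T(Y-X\beta)+\partial\!\left(\lambda\|\cdot\|_1\right)(\beta).
\]

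Finally, I would impose $0\in\partial F(\est)$ entry by entry. For an index $j$ with $\est_j\neq0$ the relevant entry of the $\ell_1$ subdifferential is the singleton $\lambda\sign(\est_j)$, and stationarity becomes $2g'\!\left(\|Y-X\est\|_2^2\right)(X^T(Y-X\est))_j=\lambda\sign(\est_j)$; dividing by $2g'\!\left(\|Y-X\est\|_2^2\right)$ gives the claimed equality. For an index $j$ with $\est_j=0$ the condition reads $2g'\!\left(\|Y-X\est\|_2^2\right)(X^T(Y-X\est))_j\in[-\lambda,\lambda]$, which after the same division is exactly the claimed inequality. The division is legitimate precisely because $g'>0$ by assumption, so this positivity is the one hypothesis the argument genuinely uses.

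There is no deep obstacle here; the only points requiring care are the justification of the subdifferential sum rule (which is immediate, since the composite loss is differentiable, so no constraint qualification or relative-interior condition is needed) and the differentiability of $g$ at the argument $\|Y-X\est\|_2^2$ (guaranteed by the standing assumption on $g$, with the harmless boundary case handled by the conventions already adopted). The equivalence is two-directional throughout, since $0\in\partial F(\est)$ is simultaneously necessary and sufficient for a minimizer of the convex $F$, matching the ``if and only if'' in the statement.
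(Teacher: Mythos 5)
Your proposal is correct and follows essentially the same route as the paper's proof: both apply standard subgradient calculus to the convex objective, use the chain rule to get the gradient $-2g'\left(\|Y-X\beta\|_2^2\right)X^T(Y-X\beta)$ of the smooth part, combine it with the coordinatewise subdifferential of $\lambda\|\cdot\|_1$ via the sum rule, and read off $0\in\partial F(\est)$ entry by entry. Your explicit remark that dividing by $2g'\left(\|Y-X\est\|_2^2\right)$ is justified by the standing assumption $g'>0$ is a point the paper leaves implicit, but there is no substantive difference.
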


\begin{lemma}\label{res:invertibility}
  Assume that for a nonempty set $A\subset \indexlasso$ it holds that $|X_i^TX_j|\leq \frac n {2|A|}$ for all $i,j\in A$ with $i\neq j$. Then, $X_A^TX_A$ is invertible and its inverse $(X_A^TX_A)^{\text -1}$ fulfills the following inequalities for all $l\in A$:
  \begin{align*}
\sum_{\substack{k=1\\k\neq l}}^{|A|}|\left((X_A^TX_A)^{\text -1}\right)_{kl}|\leq \frac 1 n-\frac 1 {n|A|}\leq ((X_A^TX_A)^{\text -1})_{ll}\leq \frac 1 n +\frac 1 {n|A|}.
  \end{align*}
These inequalities become strict inequalities if $|X_i^TX_j|< \frac n {2|A|}$ for all $i,j\in A$ with $i\neq j$.
\end{lemma}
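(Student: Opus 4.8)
The plan is to work with $M:=X_A^TX_A\in\R^{|A|\times|A|}$, whose diagonal entries are all equal to $n$ by the normalization $(X^TX)_{jj}=n$ and whose off-diagonal entries satisfy $|M_{ij}|=|X_i^TX_j|\le n/(2|A|)$ by hypothesis. Writing $m:=|A|$, the off-diagonal absolute row sums of $M$ are at most $(m-1)\,n/(2m)<n/2<n$, so $M$ is strictly diagonally dominant and hence nonsingular (Levy--Desplanques); this settles the invertibility claim. It then remains to control the entries of $N:=M^{-1}$.

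The key idea is to read the identity $MN=I$ column by column and exploit that every diagonal entry of $M$ equals $n$. Fixing $l\in A$, the $l$-th column of $MN=I$ yields $nN_{ll}=1-\sum_{j\ne l}M_{lj}N_{jl}$ and $nN_{kl}=-\sum_{j\ne k}M_{kj}N_{jl}$ for $k\ne l$. I would introduce the off-diagonal column mass $c_l:=\sum_{k\ne l}|N_{kl}|$ and bound each off-diagonal equation by the triangle inequality together with $|M_{kj}|\le n/(2m)$, obtaining $|N_{kl}|\le \tfrac1{2m}\sum_{j\ne k}|N_{jl}|$. Summing over $k\ne l$ and rearranging the resulting double sum via $\sum_{k\ne l}\sum_{j\ne k}|N_{jl}|=(m-1)\bigl(|N_{ll}|+c_l\bigr)-c_l$ produces the self-consistent inequality $(m+2)\,c_l\le (m-1)\,|N_{ll}|$. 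In the same way the diagonal equation gives $|nN_{ll}-1|\le \tfrac{n}{2m}\,c_l$.

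These two inequalities couple $c_l$ and $N_{ll}$, and the crux is to decouple them. Substituting the diagonal bound $|N_{ll}|\le \tfrac1n+\tfrac1{2m}c_l$ into $(m+2)c_l\le(m-1)|N_{ll}|$ and solving for $y:=nc_l$ yields $y\le \tfrac{2m(m-1)}{(2m+1)(m+1)}$; in particular $\tfrac{n}{2m}c_l<1$, so $N_{ll}>0$ and $|N_{ll}|=N_{ll}$. Feeding this bound on $y$ back into the three quantities of interest and simplifying gives $n c_l\le y<\tfrac{m-1}{m}$, $nN_{ll}\le 1+\tfrac{y}{2m}<1+\tfrac1m$, and $nN_{ll}\ge 1-\tfrac{y}{2m}>1-\tfrac1m$, which are exactly the asserted inequalities after dividing by $n$. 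Finally, each triangle inequality above becomes strict as soon as $|X_i^TX_j|<n/(2m)$ for all $i\ne j$, so the whole chain of estimates produces strict inequalities in that case; the degenerate case $m=1$, where the off-diagonal sum is empty, is checked directly.

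I expect the main obstacle to be precisely this coupling: because each off-diagonal equation for $N_{kl}$ involves the entire $l$-th column of $N$ (including the diagonal entry $N_{ll}$), one cannot bound $c_l$ without simultaneously controlling $N_{ll}$, and conversely the diagonal bound depends on $c_l$. A naive term-by-term estimate only delivers the crude diagonal-dominance bound $\|N\|_\infty\le 2m/\bigl(n(m+1)\bigr)$ on the full row mass, which is not sharp enough to separate the diagonal contribution from the off-diagonal one; it is the pair of self-consistent inequalities above that permits the clean split into the three stated bounds.
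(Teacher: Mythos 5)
Your proof is correct, and it takes a genuinely different route from the paper's. The paper argues via a Neumann series: it sets $B:=\1-X_A^TX_A/n$ with $a:=|A|$, notes $|B_{kj}|\le (1-\delta_{kj})/(2a)$, bounds the largest singular value of $B$ by $1/2$ so that $(X_A^TX_A/n)^{-1}=\1+\sum_{i=1}^{\infty}B^i$, proves $|(B^i)_{kj}|\le 1/(2^i a)$ by induction, and sums the geometric series to get the entrywise bound $\sum_{i\ge 1}|(B^i)_{kl}|\le 1/a$, from which all three inequalities follow (the strict case is claimed "along the same lines"). You instead obtain invertibility from strict diagonal dominance (Levy--Desplanques) and then work with the finite identity $MN=\1$ for $M:=X_A^TX_A$, $N:=M^{-1}$, $m:=|A|$: your column-wise triangle inequalities, the double-sum rearrangement $\sum_{k\ne l}\sum_{j\ne k}|N_{jl}|=(m-1)\bigl(|N_{ll}|+c_l\bigr)-c_l$, the coupled system $(m+2)c_l\le(m-1)|N_{ll}|$, $|nN_{ll}-1|\le\tfrac{n}{2m}c_l$, and its solution $nc_l\le\tfrac{2m(m-1)}{(2m+1)(m+1)}$ all check out. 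The trade-offs: the paper's argument is fully self-contained (invertibility falls out of the norm bound rather than a cited theorem) and controls every individual entry of the inverse, not just column masses; your argument avoids infinite series and delivers a strictly sharper column-sum bound, since $\tfrac{2m(m-1)}{(2m+1)(m+1)}<1-\tfrac1m$ for $m\ge 2$, and this sharper constant is actually attained in the $2\times 2$ case with both off-diagonal inner products equal to $n/4$. A pleasant by-product of your route is that the lemma's displayed inequalities come out strict for every $|A|\ge 2$ even under the non-strict hypothesis, so the final strict-inequality claim requires no separate rerun of the argument; note only that for $|A|=1$ the first "strict" inequality would read $0<0$ and fails for either proof --- an edge case the paper glosses over and that you rightly isolate for direct checking.
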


\subsection{Proofs of the auxiliary results}

\begin{proof}[Proof of Lemma~\ref{res:lassoinvertible}]
We proof the claim by contradiction. To this end, we assume that the matrix $\graset$ is not invertible. This means that there exists a nonzero vector $r\in\rset$ such that $\graset r=0$ and thus $\matset r =0$. Without loss of generality, we may assume that $\set= \{1,\dots, \setnor\}$ and therefore $X(r^T,0,\dots,0)^T=\matset r=0$. Consequently, for any scalar $\alpha\in\R$ and associated vector $v_\alpha:=\est + \alpha (r^T,0,\dots,0)^T\in\rp$, it holds that 
\begin{equation}
  \label{eq:invertb1}
g(Y-Xv_\alpha)=g(Y-X\est).  
\end{equation}
Next, we note that $\estset$ belongs to the boundary of the compact set $K:=\{u\in\rset:\|u\|_1\leq\|\estset\|_1\}\subset \R^\setnor$. Simple geometric considerations reveal that there is a scalar $\alpha\in\R$ such that the vector $w_\alpha:=\estset+\alpha r\in\rset$ has the properties $(w_\alpha)_i=0$ for an index $i\in\{1,\dots,\setnor\}$ and $w_\alpha\in K$. For this specific value of $\alpha$, the first property implies
\begin{equation*}
  \label{eq:invert2}
  \|v_\alpha\|_0=\|w_\alpha\|_0<\setnor=\|\est\|_0, 
\end{equation*}
where $\|\cdot\|_0$ denotes the number of nonzero entries of a vector. Additionally, the second property implies $\|v_\alpha\|_1=\|w_\alpha\|_1\leq\|\estset\|_1=\|\est\|_1$ and therefore, together with Equation~\eqref{eq:invertb1}, 
\begin{equation*}
  \label{eq:invert3}
  g(Y-Xv_\alpha)+\lambda\|v_\alpha\|_1\leq g(Y-X\est)+\lambda\|\est\|_1.
\end{equation*}
The last two displays contradict the assumption in Section~\ref{sec:conventions} that $\est$ is a minimizer of~\eqref{eq:estimator} with a minimal number of nonzero entries among all minimizers.
\end{proof}

\begin{proof}[Proof of Lemma~\ref{res:lsfitting}] The proof relies on simple geometric considerations and the assumption on the noise stated in Section~\ref{sec:conventions}.\\
As a first step, we consider the problem for a fixed but arbitrary nonempty set $B\subset \indexlasso$. To this end, we define the subspace $V_B:=\op{span}\{X_j:j\in B\}$ as the linear span of the columns of $X$ with indices in $B$, similarly  $V_{B-\{j\}}:=\op{span}\{X_j:j\in B-\{j\}\}$, and we denote by 
$P_B$ the operator for the orthogonal projection on $V_B$. It holds that $Y=P_BY+(1-P_B)Y$, $P_BX_Bv=X_Bv$, and $P_BY, X_Bv\perp (1-P_B)Y$ for any vector $v\in\R^{|B|}$. Therefore,  
\begin{equation*}\label{eq:reformin1}
  \argmin_{\beta\in\R^{|B|}}\|Y-X_B\beta\|_2^2=\argmin_{\beta\in\R^{|B|}}\|P_BY-X_B\beta\|_2^2
\end{equation*}
and
\begin{equation*}\label{eq:reformin2}
  \min_{\beta\in\R^{|B|}}\|P_BY-X_B\beta\|_2^2=0.
\end{equation*}
From the last two displays, we deduce
\begin{equation*}
  \mpr(S(\estref_B)\neq B)=\mpr(\exists j\in B:P_BY\in V_{B-\{j\}}).
\end{equation*}
The claim is now a consequence of the union bound. To see this, note that $P_BY\in V_{B-\{j\}}\Leftrightarrow Y \in (V_B)^\perp\oplus V_{B-\{j\}}$. Moreover, if $X_B^TX_B$ is invertible, it holds that $(V_B)^\perp\oplus V_{B-\{j\}}\neq \rn$. We can finally invoke the union bound and the assumption on the noise to obtain
\begin{align*}
\mpr(S(\estref_A)\neq A)&\leq \sum_{B}  \mpr(S(\estref_B)\neq B)\\
&\leq \sum_{B}\sum_{j\in B}\mpr(Y \in (V_B)^\perp\oplus V_{B-\{j\}})\\
&=0,
\end{align*}
where the sum is taken over all nonempty sets $B\subset \indexlasso$ such that $X_B^TX_B$ is invertible.
\end{proof}

\begin{proof}[Proof of Lemma \ref{res:kkt}] 
KKT conditions \cite{kkt} have been derive for many situations, but for convenience, we  still provide a detailed derivation for the results needed in this paper. To this end, we generalize results in \cite{Yoyo13}, where the case of the Square-Root Lasso is treated.\\
Since all terms of the criterion \eqref{eq:estimator} are convex, and
thus, the criterion is convex, we can apply standard subgradient calculus. The subgradient $\partial_xf$ of a convex function $f:\R^p\to\R$ at a
point $x\in\R^p$ is defined as the
set of vectors $v\in\R^p$ such that for all $y\in\R^p$
\begin{equation*}
  f(y)\geq f(x)+v^T(y-x).
\end{equation*}
From this, one derives easily that subgradients are linear and additive and
that the subgradient $\partial_xf$ is equal to the gradient $\nabla_xf$ if
the function $f$ is differentiable at $x$. Moreover,  $x\in\R^p$ is a
minimum of the function $f$ if  and only if $0\in\partial_xf$.
The first term of the criterion \eqref{eq:estimator} is differentiable, and we have
\begin{align}\label{eq.yoyo.sub1}
\nonumber\nabla_\beta ~g\left(\|Y-X\beta\|_2^2\right)=&g'\left(\|Y-X\beta\|_2^2\right)\nabla_\beta\|Y-X\beta\|_2^2\\
=&\text -2g'\left(\|Y-X\beta\|_2^2\right)X^T(Y-X\beta).
\end{align}
For the remaining term, we observe that for any scalar $u\in\R\backslash\{0\}$
\begin{align}\label{eq.yoyo.sub2}
\nabla_u|u|=\sign(u).
\end{align}
Moreover, 
\begin{align}\label{eq.yoyo.sub3}
v\in\partial_{u=0}|u|\Leftrightarrow |z|\geq |0|+v^T(z-0)=v^Tz~~~\text{for
all~}z\in\R
\end{align}
and consequently $\partial_{u=0}|u|=\{v\in\R:|v|\leq 1\}$.\\
The claim follows then from Equations \eqref{eq.yoyo.sub1},
\eqref{eq.yoyo.sub2}, and  \eqref{eq.yoyo.sub3} by the additivity of subgradients.
\end{proof}

\newcommand{\sumkhere}{\sum_{k=1}^a}
\newcommand{\sumjhere}{\sum_{j=1}^a}
\newcommand{\sumihere}{\sum_{i=1}^a}
\newcommand{\sumlhere}{\sum_{l=1}^a}

\begin{proof}[Proof of Lemma~\ref{res:invertibility}]
The proof consists of the application of simple algebra. We only derive the inequalities corresponding to the first part, since the strict inequalities corresponding to the second part can be derived along the same lines.\\ 
First, for ease of exposition, we denote the cardinality of $A$ by $a:=|A|$ and introduce the matrix $B:=\1-X_A^TX_A/n\in \R^{a\times a}$. The entries of the matrix~$B$ are bounded according to 
\begin{equation}\label{eq:invert1}
|B_{ij}|=|\delta_{ij}-X_i^TX_j/n|\leq \frac {1-\delta_{ij}} {2a}\leq \frac{1}{2a}. 
\end{equation}
This leads for any $x\in \R^a$ to
  \begin{align*}
    |x^TBx|=|\sumihere x_i\sumjhere B_{ij}x_j|
\leq \sumihere |x_i|\sumjhere |B_{ij}||x_j|
\leq\frac{\|x\|_1^2}{2a}
\leq \frac{\|x\|_2^2}{2}.
 \end{align*}
Hence, the largest singular value of the matrix $B$ is smaller or equal to $1/2$. The limit $\sum_{i=1}^\infty B^i$ therefore exists, and the inverse of $X_A^TX_A/n$ is given by
\begin{equation}
  \label{eq:invertible}
  (X_A^TX_A/n)^{\text -1}=(\1-B)^{\text -1}=\1+\sum_{i=1}^\infty B^i.
\end{equation}
We now have a look at the entries of $B^i$. For any $i\in\{1,2,\dots\}$ and $k,j\in A$, the bound~\eqref{eq:invert1} on the entries of $B$ implies
\begin{equation*}
  |(B^{i+1})_{kj}|=  |(B^{i}B)_{kj}|=|\sumlhere (B^{i})_{kl}B_{lj}|\leq \sumlhere |(B^{i})_{kl}||B_{lj}|\leq \frac 1 {2a}\sumlhere |(B^i)_{kl}|.
\end{equation*}
One can thus deduce by induction that $|(B^{i})_{kj}|\leq \frac 1 {2^ia}$ for all $i\in\{1,2,\dots\}$. Consequently,
\begin{align}\label{eq:invertibleentry}
  \sum_{i=1}^\infty|(B^i)_{kl}|\leq \sum_{i=1}^\infty\frac{1}{2^ia}= \frac 1 a
\end{align}
for all $k,l\in A$.\\
Now, we turn to the entries of $(X_A^TX_A/n)^{\text -1}$: Equations~\eqref{eq:invertible} and \eqref{eq:invertibleentry} yield
\begin{align*}
  \left((X_A^TX_A/n)^{\text -1}\right)_{ll}=1+\sum_{i=1}^\infty(B^i)_{ll}
\leq 1+\sum_{i=1}^\infty|(B^i)_{ll}| \leq 1+\frac 1 a
\end{align*}
for any $l\in A$. Similarly,
\begin{align*}
  \left((X_A^TX_A/n)^{\text -1}\right)_{ll}
\geq 1-\frac 1 a
\end{align*}
and
\begin{align*}
  \sum_{\substack{k=1\\k\neq l}}^a|\left((X_A^TX_A/n)^{\text -1}\right)_{kl}|&=\sum_{\substack{k=1\\k\neq l}}^a|\sum_{i=1}^\infty(B^i)_{kl}|
\leq \sum_{\substack{k=1\\k\neq l}}^a \sum_{i=1}^\infty |(B^i)_{kl}|
=  1-\frac 1 a
\end{align*}
for any $l\in A$. These three displays imply for any $l\in A$ the following inequalities:
  \begin{align*}
\sum_{\substack{k=1\\k\neq l}}^a|\left((X_A^TX_A/n)^{\text-1}\right)_{kl}|\leq 1-\frac 1 a\leq ((X_A^TX_A/n)^{\text -1})_{ll}\leq 1+\frac 1 a.
  \end{align*}
The claim can be deduce from this using $n(X_A^TX_A)^{\text -1}=(X_A^TX_A/n)^{\text -1}$.

\end{proof}

\subsection{Proof of Theorem~\ref{res:main}}

\begin{proof}[Proof of Theorem~\ref{res:main}] The first claim follows directly from Lemmata~\ref{res:lassoinvertible} and \ref{res:lsfitting}.\\
For two remaining inequalities, we invoke the KKT conditions stated in Lemma~\ref{res:kkt}; namely, we use that
\begin{align*}
\matset^T(Y-X\widehat\beta)=\frac{\lambda}{2g'\left(\|Y-X\est\|_2^2\right)}\sign(\estset)
\end{align*}
for the estimator~\eqref{eq:estimator}. The above equation can be rewritten, using the model~\eqref{eq:model}, as
\begin{align*}
\matset^TX\betatrue+\sigma\matset^T\epsilon=\matset^T\matset\estset+\frac{\lambda}{2g'\left(\|Y-X\est\|_2^2\right)}\sign(\estset). 
\end{align*}
The matrix $\graset$ is invertible according to Lemma~\ref{res:lassoinvertible}, and we can therefore multiply both sides of the last equation with its inverse  $\grasetinv$. This yields
\begin{align*}
\grasetinv\matset^TX\betatrue+\sigma\grasetinv\matset^T\epsilon=\estset+\frac{\lambda}{2g'\left(\|Y-X\est\|_2^2\right)}\grasetinv\sign(\estset). 
\end{align*}
For the LS refitted estimator \eqref{eq:estimatorrefitted}, we derive similarly 
\begin{align*}
\grasetinv\matset^TX\betatrue+\sigma\grasetinv\matset^T\epsilon=\estsetref.
\end{align*}
The previous two displays imply the second equality
\begin{align*}
  \|\estref-\est\|_q&= \| \estsetref-\estset\|_q=\frac{\lambda}{2g'\left(\|Y-X\est\|_2^2\right)}\|\grasetinv\sign(\estset)\|_q.
\end{align*}
For the third claim, we find similarly
\begin{align*}
X\estref-X\est=\matset\estsetref-\matset\estset &=\frac{\lambda}{2g'\left(\|Y-X\est\|_2^2\right)}\matset\grasetinv\sign(\estset)\label{eq:kktresults}.
\end{align*}
Using the model~\eqref{eq:model} and the definition of the LS refitted estimator~\eqref{eq:estimatorrefitted}, we also find
  \begin{align*}
&\|X\betatrue-X\estref\|_2^2-    \|X\betatrue-X\est\|_2^2\\
=&\|Y-X\estref-\sigma\epsilon\|_2^2-    \|Y-X\est-\sigma\epsilon\|_2^2\\
=&\|Y-X\estref\|_2^2+\|\sigma\epsilon\|_2^2-2\sigma<Y-X\estref,\epsilon>-\left(    \|Y-X\est\|_2^2+\|\sigma\epsilon\|_2^2-2\sigma<Y-X\est,\epsilon>\right)\\
=&\|Y-X\estref\|_2^2+2\sigma<X\estref-X\est,\epsilon>-\|Y-X\est\|_2^2\\
\leq &2\sigma<X\estref-X\est,\epsilon>.
  \end{align*}
The two previous displays yield
\begin{align*}
\|X\betatrue-X\estref\|_2^2-\|X\betatrue-&X\est\|_2^2\\
 \leq &\frac{\sigma\lambda}{g'\left(\|Y-X\est\|_2^2\right)}<\matset\grasetinv\sign(\estset),\epsilon>\\
=&\frac{\sigma\lambda}{g'\left(\|Y-X\est\|_2^2\right)}<\grasetinv\sign(\estset),\matset^T\epsilon>\\
\leq &\|\grasetinv\sign(\estset)\|_1\frac{\sigma\lambda\|\matset^T\epsilon\|_\infty}{g'\left(\|Y-X\est\|_2^2\right)}.
\end{align*}
This eventually proves the third claim.
\end{proof}

\subsection{Some technical results and remarks}\label{sec:techrm}

\begin{lemma}\label{res:convex}
  The function $\beta\mapsto g\left(\|Y-X\beta\|_2^2\right)$ is convex, and a solution of~\eqref{eq:estimator} exists.
\end{lemma}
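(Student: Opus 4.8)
The plan is to prove Lemma~\ref{res:convex} in two independent parts: first convexity of the composed function, then existence of a minimizer of the full criterion~\eqref{eq:estimator}. For convexity, the key observation is that $\beta\mapsto\|Y-X\beta\|_2^2$ is a convex function, being the composition of the affine map $\beta\mapsto Y-X\beta$ with the convex function $u\mapsto\|u\|_2^2$. Under the assumptions stated in Section~\ref{sec:conventions}, the function $g$ is convex with strictly positive derivative $g'$, so in particular $g$ is nondecreasing on $[0,\infty)$. The composition of a nondecreasing convex function with a convex function is again convex, which gives the convexity of $\beta\mapsto g\left(\|Y-X\beta\|_2^2\right)$. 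I would spell out this last composition rule directly from the definition of convexity rather than cite it, since it is short: for $\beta_1,\beta_2\in\rp$ and $t\in[0,1]$, convexity of the inner function yields a pointwise inequality for the argument of $g$, and then monotonicity followed by convexity of $g$ closes the chain.

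For the existence of a minimizer, the natural approach is a coercivity-plus-continuity argument. The objective $h(\beta):=g\left(\|Y-X\beta\|_2^2\right)+\lambda\|\beta\|_1$ is continuous (since $g$ is differentiable, hence continuous, and the remaining maps are continuous) and, by the first part, convex. To invoke the standard result that a continuous function attains its infimum over a closed set, I would restrict attention to a sublevel set: fix any reference point, say $\beta=0$, set $M:=h(0)=g\left(\|Y\|_2^2\right)$, and consider the sublevel set $\{\beta\in\rp:h(\beta)\leq M\}$. On this set $\lambda\|\beta\|_1\leq M - g(0) \leq M$ because $g\geq g(0)$ (as $g$ is nondecreasing), which forces $\|\beta\|_1\leq M/\lambda$. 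Hence the sublevel set is contained in a ball and is therefore bounded; being the preimage of a closed set under a continuous function it is also closed, hence compact. A continuous function attains its minimum on a nonempty compact set, and minimizing $h$ over this compact sublevel set is equivalent to minimizing it over all of $\rp$.

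The main obstacle, and the only place requiring slight care, is the coercivity step: I must ensure the sublevel set is genuinely bounded and not derailed by the behavior of $g$. The $\ell_1$-penalty $\lambda\|\beta\|_1$ does the work here, since it is coercive on its own and $g\left(\|Y-X\beta\|_2^2\right)$ is bounded below by $g(0)$ (a finite constant, using that $g$ is defined and nondecreasing on $[0,\infty)$); thus boundedness of the sublevel set follows from the penalty alone and does not rely on $X$ having full column rank, which is essential in the high-dimensional regime where $p>n$. I would write these two parts as a short two-paragraph proof, first establishing convexity and then deducing existence via the compact-sublevel-set argument, making explicit that monotonicity of $g$ is what both the composition rule and the lower bound $g\geq g(0)$ rely upon.
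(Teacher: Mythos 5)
Your proof is correct and follows essentially the same route as the paper's: convexity of $\beta\mapsto g\left(\|Y-X\beta\|_2^2\right)$ via convexity of the inner function combined with monotonicity and convexity of $g$, and existence by restricting the minimization to a compact set determined by the bound $\lambda\|\beta\|_1\leq g\left(\|Y\|_2^2\right)-g(0)$, followed by continuity on a compact set. The only quibble is that your intermediate step $M-g(0)\leq M$ tacitly assumes $g(0)\geq 0$, which the assumptions on $g$ do not guarantee; this is harmless, since the radius $(M-g(0))/\lambda$ from the correct part of your inequality already bounds the sublevel set.
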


\begin{proof}[Proof of Lemma~\ref{res:convex}]
  On may readily check that the function $\beta\mapsto \|Y-X\beta\|_2^2$ is twice continuously differentiable with Hessian $2X^TX$. The matrix $2X^TX$ is positive semidefinite, and therefore $\beta\mapsto \|Y-X\beta\|_2^2$ is convex. Consequently, 
\begin{equation*}
\|Y-X(\alpha\beta+(1-\alpha)\gamma)\|_2^2\leq \alpha\|Y-X\beta\|_2^2+(1-\alpha)\|Y-X\gamma\|_2^2  
\end{equation*}
for any $\alpha\in[0,1]$ and $\beta,\gamma\in\rp$. This implies that  
\begin{align*}
g\left(\|Y-X(\alpha\beta+(1-\alpha)\gamma)\|_2^2\right)&\leq g\left(\alpha \|Y-X\beta\|_2^2+(1-\alpha)\|Y-X\gamma\|_2^2\right)\\
&\leq \alpha g\left(\|Y-X\beta\|_2^2\right)+(1-\alpha)g\left(\|Y-X\gamma\|_2^2\right),    
\end{align*}
since $g$ is increasing and convex. This concludes the proof of the first claim.\\
For the second claim, we note that it holds (since $g$ is increasing)
\begin{equation*}
  g\left(\|Y\|_2^2\right)\geq g\left(\|Y-X\est\|_2^2\right) + \lambda\|\est\|_1\geq g\left(0\right) + \lambda\|\est\|_1
\end{equation*}
if $\est$ is a minimizer of~\eqref{eq:estimator}. Consequently,   for finding minimal values, the function $\beta\mapsto g\left(\|Y-X\beta\|_2^2\right)+\lambda\|\beta\|_1$ can be restricted to the compact set $\mathcal B:=\left\{\beta\in\rp:\lambda\|\beta\|_1\leq g\left(\|Y\|_2^2\right)-g(0)\right\}$. The proof follows then from standard calculus.
\end{proof}

\begin{remark}[Differentiability at 0]\label{rm:sqrtlassoextension} The differentiability of $g$ in $0$ is only assumed for ease of exposition. In fact, it ensures that the KKT conditions in Lemma~\ref{res:kkt} hold true when $Y= X\est$. However, the triangle inequality and the model~\eqref{eq:model} yield
 \begin{equation*}
    \|Y-X\est\|_2\geq \sigma\|\epsilon\|_2-\|X\betatrue -X\est\|_2.
  \end{equation*}
This implies that only weak assumptions on $\|X\betatrue -X\est\|_2$ (for example, that slow rate bounds apply, cf. Remark~\ref{rm:correlations} in Section~\ref{sec:theory}) ensure $Y\neq X\est$ with high probability, which then allows one to include functions $g$ that are differentiable on $(0,\infty)$ only. 
\end{remark}

\begin{remark}[Criterion]\label{rm:heuristics}
Let us have a peek at the criterion~\eqref{eq:criterion}: Norms of the vector $\grasetinv\sign(\estset)$ play a crucial role in the theoretical results in Section~\ref{sec:theory}. However, criteria based on norms of the vector instead of the vector itself turn out to be not suitable. Indeed, the signs involved can be crucial as one can infer from the KKT conditions~\ref{res:kkt} in Section~\ref{sec:aux}. To motivate the specific form of the criterion, we consider two special cases in the following. Assuming that the correlations for the estimated active set are bounded by
\begin{equation}\label{eq:remarkheuristics2}
    |X_i^TX_j|< n/(2\setnor)~\text{~~~for all~}i,j\in \set,~ i\neq j,
\end{equation}
we can easily deduce from Lemma~\ref{res:invertibility} in Section~\ref{sec:aux} that $F(\set)=0$ as desired (cf. Remark~\ref{rm:correlations} in Section~\ref{sec:theory}).\\
In the extreme case of orthogonal designs, that is,  $X^TX=n\1$, assumption~\eqref{eq:remarkheuristics2} is naturally satisfied. Additionally, for orthogonal designs, it holds that the equation
\begin{align*}
\grasetinv\matset^TX\betatrue+&\sigma\grasetinv\matset^T\epsilon=\estset+\frac{\lambda}{2g'\left(\|Y-X\est\|_2^2\right)}\grasetinv\sign(\estset)\nonumber 
\end{align*}
derived in the proof of Theorem~\ref{res:main} simplifies to
\begin{align*}
\betatrue_{\set}+\frac{\sigma}{n}\matset^T\epsilon=\estset+\frac{\lambda}{2ng'\left(\|Y-X\est\|_2^2\right)}\sign(\estset)
\end{align*}
(we omit the straightforward derivations). Similarly, we obtain for the LS~refitted estimator 
\begin{align*}
\betatrue_\set+\frac{\sigma}{n}\matset^T\epsilon=\estsetref. 
\end{align*}
The term $\frac{\lambda}{2ng'\left(\|Y-X\est\|_2^2\right)}\sign(\estset)$ can be interpreted as a bias term for the initial estimator. It is not present for the LS refitted estimator indicating that the LS~refitted estimator is - for orthogonal designs - more accurate than the initial estimator.
\end{remark}

\subsection{Alternative implementations}\label{sec:numerics} 
For applications, it is important that the conclusions stated in Section~\ref{sec:discussion} are independent of the computational implementation. In this section, we therefore detail on our implementation and compare it to alternative ones. The results are in accord with the results of Section~\ref{sec:main} and the conclusions of Section~\ref{sec:discussion}, even though the results slightly differ among the implementations.\\

\paragraph{Computations} We considered the setting and computation of Section~\ref{sec:main} with the following details or modifications:\\
{\it glmnet}: This is the implementation applied in Section~\ref{sec:main}. For the Lasso~\eqref{eq:lasso}, we used the R-package glmnet version 1.9-3, see \cite{Hastie10,Rsoftware} for details. For the subsequent computation of the least-squares refitting~\eqref{eq:estimatorrefitted}, we used the well-known explicit formula of the least-squares estimator; however, we replaced $\graset$ by $\graset+10^{\text -7}/n\times\1$ to improve the numerical stability of the matrix inversion involved (the magnitude of the factor in front of the identity matrix~$\1$ had, as long as it was chosen in a reasonable range, no influence on the results).\\
{\it glmnet$\times$2}: We modify the above implementation by using the mentioned glmnet package for the computation of both the Lasso and the subsequent least-squares refitting (with tuning parameter $\lambda=0$ for the latter).\\
{\it stage, lar}, and {\it lasso}: For the Lasso, we used the algorithms forward stagewise, least angle regression, and lasso, respectively, of the R-package lars version 1.1, see \cite{lars,Efron-LARS} for details. The least-squares refitting was conducted as in the first implementation using the explicit formula for the least-squares estimator and replacing the matrix $\graset$ by $\graset+10^{\text -7}/n\times\1$.

\begin{table}
\begin{tabular}{l l c c c c c}
& & \bf{pred. error} & \bf est. error & \bf false neg. & \bf false pos. & \bf LS pred./est.\\
~&~&~&~&~&~\\ 
 \multirow{3}{*}{\rotatebox{90}{\mbox{glmnet}}} &{\bf Lasso} & $(5.10 \pm 0.04)\times 10^{\text -2}$ & $2.19\pm 0.02$ & $8.29\pm 0.08$ & $50.9\pm 0.4$ & 0/0 \\ 
& {\bf LS Lasso}& $(5.67 \pm 0.04)\times 10^{\text -2}$ & $3.51\pm 0.04$ & $8.29\pm 0.08$ & $50.9\pm 0.4$ & 1/1 \\
& {\bf $c$-LS Lasso}& $(5.15 \pm 0.04)\times 10^{\text -2}$ & $2.19\pm 0.02$ & $8.29\pm 0.08$ & $50.9\pm 0.4$ & 0.12/0\\
~&~&~&~&~&~\\ 
 \multirow{3}{*}{\rotatebox{90}{\mbox{glmnet$\times$2}}} &{\bf Lasso} & $(5.10 \pm 0.04)\times 10^{\text -2}$ & $2.19\pm 0.02$ & $8.29\pm 0.08$ & $50.9\pm 0.4$ & 0/0 \\ 
& {\bf LS Lasso}& $(8.16 \pm 0.07)\times 10^{\text -2}$ & $5.04\pm 0.04$ & $8.29\pm 0.08$ & $50.9\pm 0.4$ & 1/1 \\
& {\bf $c$-LS Lasso}& $(5.42 \pm 0.05)\times 10^{\text -2}$ & $2.19\pm 0.02$ & $8.29\pm 0.08$ & $50.9\pm 0.4$ & 0.115/0\\
~&~&~&~&~&~\\
\multirow{3}{*}{\rotatebox{90}{\mbox{stage}}} &{\bf Lasso} & $(6.35 \pm 0.06)\times 10^{\text -2}$ & $2.79\pm 0.08$ & $12.33\pm 0.06$ & $51.8\pm 0.4$ & 0/0 \\ 
& {\bf LS Lasso}& $(5.91 \pm 0.04)\times 10^{\text -2}$ & $3.28\pm 0.08$ & $12.33\pm 0.06$ & $51.8\pm 0.4$ & 1/1 \\
& {\bf $c$-LS Lasso}& $(6.31 \pm 0.06)\times 10^{\text -2}$ & $2.79\pm 0.08$ & $12.33\pm 0.06$ & $51.8\pm 0.4$ & 0.051/0\\
~&~&~&~&~&~\\
\multirow{3}{*}{\rotatebox{90}{\mbox{lar}}} &{\bf Lasso} & $(10.11 \pm 0.06)\times 10^{\text -2}$ & $10.10\pm 0.20$ & $10.56\pm 0.07$ & $86.8\pm 0.2$ & 0/0 \\ 
& {\bf LS Lasso}& $\textcolor{white}{0}(8.65 \pm 0.05)\times 10^{\text -2}$ & $15.79\pm 0.28$ & $10.56\pm 0.07$ & $86.8\pm 0.2$ & 1/1 \\
& {\bf $c$-LS Lasso}& $\textcolor{white}{0}(8.68 \pm 0.05)\times 10^{\text -2}$ & $11.96\pm 0.27$ & $10.56\pm 0.07$ & $86.8\pm 0.2$ & 0.96/0.221\\
~&~&~&~&~&~\\
\multirow{3}{*}{\rotatebox{90}{\mbox{lasso}}} &{\bf Lasso} & $(8.50 \pm 0.06)\times 10^{\text -2}$ & $2.50\pm 0.01$ & $11.53\pm 0.06$ & $47.9\pm 0.2$ & 0/0 \\ 
& {\bf LS Lasso}& $(4.48 \pm 0.03)\times 10^{\text -2}$ & $2.54\pm 0.01$ & $11.53\pm 0.06$ & $47.9\pm 0.2$ & 1/1 \\
& {\bf $c$-LS Lasso}& $(7.92 \pm 0.07)\times 10^{\text -2}$ & $2.50\pm 0.01$ & $11.53\pm 0.06$ & $47.9\pm 0.2$ & 0.151/0\\
~&~&~&~&~&~\\ 
& {\bf Zero}&$108.99\pm 0.01$&$2.68$&20&0&0/0\\
~&~&~&~&~&~\\
\end{tabular}
\caption{The performances of the different computational implementations for the setting with parameters 
$n=100,$ $p=1000,$ $\sigma=0.3,$ $s=20,$ and $\kappa=0.9$. The setting and the computations are described in the text.  
}\label{tab:computations}
\end{table}

\paragraph{Results and conclusions} For all sets of parameters considered in Section~\ref{sec:main}, the results of the above implementations were consistent with the results and conclusions stated in Sections~\ref{sec:main} and~\ref{sec:discussion}, even though the implementations differed from each other by their performances and computational costs.\\
For the sake of a concise presentation, we  give in Table~\ref{tab:computations} the numerical results for only one (critical) set of parameters. (We stress that the constant $c$ was kept constant throughout the calculations; slightly better performances of the $c$-LS~Lasso are expected for a more careful choice of this constant - for example, adapted to the computational implementation under consideration.)

\section*{Acknoledgments}
The author thanks Mohamed Hebiri and Sara van de Geer for their insightful comments.

\bibliography{../../Bibliography/Literature}

\end{document}